\def\ee{{\mathcal E}}
\def\bbC{{\mathbb{C}}}
\def\bbG{{\mathbb{G}}}
\def\bbN{{\mathbb{N}}}
\def\bbR{{\mathbb{R}}}
\def\bbT{{\mathbb{T}}}
\def\bbZ{{\mathbb{Z}}}
\def\bf1{{\mathds{1}}}
\def\t{{\rm{T}}}
\def\SBM{{\rm{SBM}}}
\def\opr{{\rm{opr}}}
\def\bbR{{\mathbb{R}}}
\def\bbT{{\mathbb{T}}}
\newtheorem{notation}{Notation}
\numberwithin{equation}{section}
\begin{document}

\title{Unified Fourier transform on graphs sampled from stochastic block models}

\author{
Mahya Ghandehari\thanks{Department of Mathematical Sciences, University of Delaware, DE, 19716, US,
\email{$\{${mahya, silo$\}$@udel.edu}}
}
\and 
Jeannette Janssen\thanks{Department of Mathematics and Statistics, Dalhousie University, NS, B3H 4R2, Canada,
\email{jeannette.janssen@dal.ca}
}
\and Silo Murphy\footnotemark[1]
}


\maketitle

\begin{abstract}
Recently, an approach to graph signal processing based on graphons was proposed.
Here we show how such a graphon-driven approach to the Fourier transform can be used on graphs sampled from a stochastic block model (SBM). In particular, we show how a Fourier basis can be easily calculated from the block sizes and the block probability matrix. 
Using perturbation theory, we derive bounds on the sensitivity of the basis with respect to variations in the block sizes.
We then consider SBMs constructed from weighted Cayley graphs. 
When block sizes are equal,  a nice Fourier basis can be derived from the representation theory of the underlying group. 
When block sizes are nearly uniform, we demonstrate that this Fourier basis closely approximates the SBM Fourier basis.
For highly non-uniform block sizes, the group-based Fourier basis is no longer applicable, though, as we show, the underlying group still provides partial information about the SBM Fourier basis.
\end{abstract}

\begin{keywords}
graphon, stochastic block model, graph signal processing, graph/graphon Fourier transform.
\end{keywords}

\begin{AMS}
Primary 94A12. Secondary 05C50.
\end{AMS}

\section{Introduction}

Graph Signal Processing (GSP) has become a popular method to process data described on irregular domains. Such data can often be interpreted as a signal that assigns numerical values to the nodes of a network or graph.
GSP has been successfully used  in applications such as network failure assessment \cite{PrakashKar2020}, detection of false data attacks in smart grids \cite{DrayerRouttenberg2020}, and analysis of brain signals \cite{HGWGBR2016,Miri2024GraphLF,MAASVB2023,miri2023brain}. For an overview of GSP and its applications, we refer to \cite{GSP-overview-2023,ortega-book,2018:Ortega:GSPOverview,SM-2013,2014:Sandryhaila:BD,SandryhailaMoura13}.

One of the most fundamental concepts in classical signal processing is the Fourier transform. This notion has been generalized to graph signals by using spectral features of the underlying graph. 
In this approach, we first assign a shift operator to the underlying graph of a signal. 
The graph Fourier transform (GFT) is then defined as the projection of  signals  onto a fixed eigenbasis of the graph shift operator. 
Using the graph Fourier transform, important concepts of signal processing have been generalized to the case of graph signals. 

This spectral approach to GSP has two major drawbacks: firstly, computing eigenbases for large matrices is computationally expensive and slow; secondly, the graph Fourier basis depends rigidly on the specific graph at hand.
However, the underlying graph of a signal may sustain minor variations due to error or changes in
the network over time.
To address these challenges, recent work has leveraged the theory of graph limits to develop an \emph{instance-independent} graph Fourier transform, which we refer to as the \emph{graphon-driven FT}.

A graphon is a symmetric measurable function on $[0,1]^2$ with values in $[0,1]$. 
Every graphon $w:[0,1]^2\to [0,1]$ gives rise to a rather general  random graph model, called the $w$-random graph, whose samples are graphs of any desired size. 
These samples share a similar large-scale structure (i.e., similar subgraph densities) governed by $w$.
The $w$-random model provides an excellent sampling scheme: almost every growing sequence generated by a $w$-random model converges to $w$ in a metric derived from the \emph{cut norm} \cite{lovasz-book}. 
This graph limit theory was initiated by Lov\'{a}sz and Szegedy in \cite{lovaszszegedy2006}.
Graphons provide a flexible framework for modeling large networks in a variety of applications (see e.g., \cite{OliRei20, DupMed22, Med14b, KVMed18,CaiHo22}).

In the graphon-driven FT, the Fourier basis is determined by the shared structure of a family of graphs, rather than by each individual graph. This shared structure is captured by the graphon that generates the family of graphs.
The graphon-driven FT offers several key benefits. 
Most importantly, instead of computing an eigenbasis for each graph sampled from a graphon, we use one Fourier basis, which is derived from the graphon.
In addition to the computational advantage, this approach allows comparison of signals across graphs with similar large-scale structure (e.g., brain networks from different individuals). 
Finally, this approach captures the approximate large-scale symmetries of graphs and translates them into the Fourier domain.
For instance, a circulant graphon has cyclic symmetries, 
yielding a graphon Fourier transform similar to the classical Fourier transform. While individual sampled graphs retain these cyclic symmetries only approximately, the graphon-driven FT based on these symmetries provides a consistent, well-structured basis applicable across all samples.
Graphon-driven FT schemes were first proposed in 2020~\cite{RuizChamonRibeiro19}, 
and the robustness of these methods was proved in \cite{Ghanehari-Janssen-Kalyaniwalla,RuizChamonRibeiro20}. 

In this paper, we develop the graphon-driven FT approach for a special class of graphon models corresponding to the \emph{stochastic block model} (SBM).
In the SBM model, vertices are divided into blocks of prescribed proportion. Edges are added independently, with probability determined by the block membership of each vertex. In a large graph sampled from an SBM, the edge density between two given blocks will be approximately equal to the link probability of those two blocks (e.g., see \cref{fig:SBM}). 
The stochastic block model is a widely used framework in network analysis with applications across several domains. SBMs were originally introduced by Holland, Laskey, and Leinhardt \cite{Holland1983} in order to provide a new framework for studying relational data in large social networks. 
Subsequent work focused on extending block-based analysis of graph relations to directed graphs \cite{NowickiSnijders2001}. 
SBMs are central to community detection and graph clustering \cite{abbe,Lee2019}, with sharp information-theoretic thresholds for exact community recovery in certain SBMs established in \cite{AbbeBandeiraHall2016}. SBMs have also been used to model inter-dependencies in infrastructure systems 
so as to understand the dynamics of failures and assist in network restoration \cite{yu_baroud_risk_and_uncertainty}. In addition, SBMs are used in information theory for graph compression, where efficient lossless algorithms with asymptotically optimal rates have been developed for many non-trivial cases \cite{Wafula_compression,graph_compression}.

\begin{figure}[!htp]
\centering
\begin{minipage}{.32\textwidth}
  \centering
\includegraphics[scale=0.15]{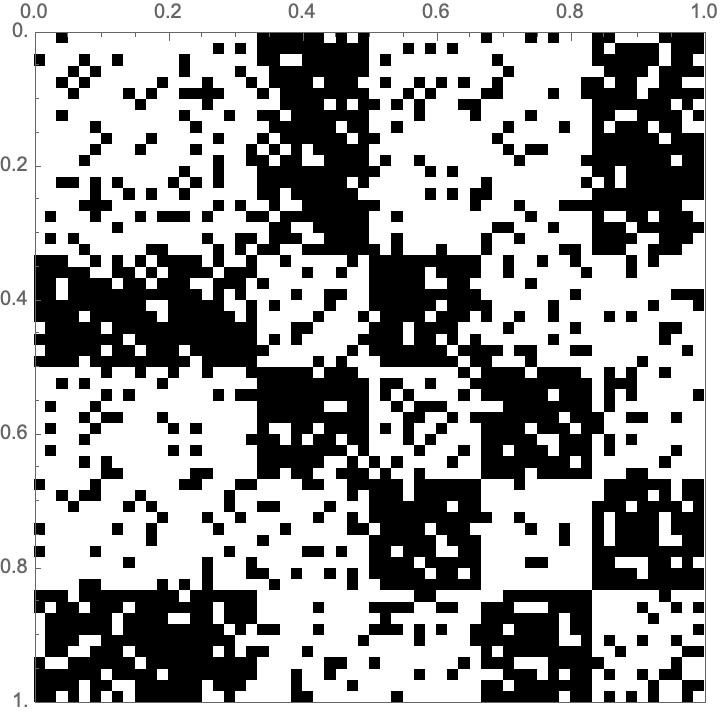}
\end{minipage}%
\begin{minipage}{.34\textwidth}
  \centering
\begin{tikzpicture}[scale=0.268]
    \draw[white] (0,0) grid (6,6);
    \draw[black] (0,0) rectangle (6,6);
    \filldraw[fill=gray!95] (0,0) rectangle (2,1);
    \filldraw[fill=gray!95] (0,3) rectangle (2,4);
    \filldraw[fill=gray!95] (2,4) rectangle (3,6);
    \filldraw[fill=gray!95] (3,3) rectangle (4,4);
    \filldraw[fill=gray!95] (0,3) rectangle (2,4);
    \filldraw[fill=gray!95] (4,0) rectangle (5,1);
    \filldraw[fill=gray!95] (2,2) rectangle (3,3);
    \filldraw[fill=gray!95] (4,2) rectangle (5,3);
    \filldraw[fill=gray!95] (5,4) rectangle (6,6);
    \filldraw[fill=gray!95] (3,1) rectangle (4,2);
    \filldraw[fill=gray!95] (5,1) rectangle (6,2);
  \end{tikzpicture}
\end{minipage}
\begin{minipage}{.32\textwidth}
  \centering
\includegraphics[scale=0.15]{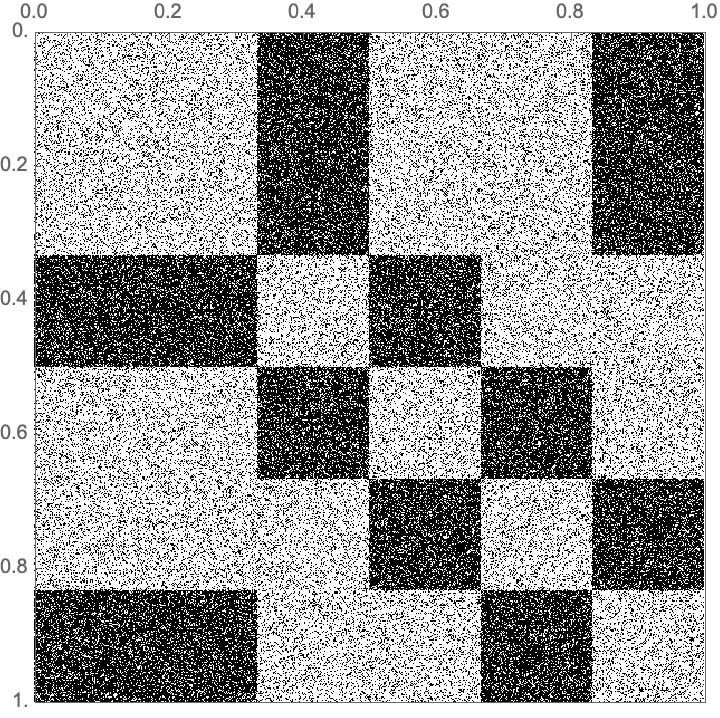}
\end{minipage}%
\caption{\scriptsize{Graphon $w$ representing SBM, shown in the middle, taking value 0.8 on gray and 0.2 on white regions. Sampled graph
 on 60 vertices (left) and one on 600 vertices (right) are shown. For graphs, the vertices have been placed on $[0,1]$, and edges are represented by black pixels.}}
\vspace{-0.27cm}
\label{fig:SBM}
\end{figure}

Interesting examples of SBMs are those with group theoretic symmetries.
Samples of such SBMs can be viewed as generalizations or perturbations of Cayley graphs. 
A Cayley graph has vertices corresponding to elements of a group and edges generated by shifts with elements of an inverse-closed subset of the group.  The underlying algebraic structure and highly symmetric nature of Cayley graphs make them a rich category of graphs for various applications. On Cayley graphs, performing GSP  compatible with the group Fourier basis has desirable properties, see \cite{BG-EUSIPCO,MDK:2019:Sampta}. 
For examples of signal processing on Cayley graphs, see \cite{permutahedron, Kotzagiannidis2018,GSP-circulant2019, Rockmore}.  
We will demonstrate that for SBM with relaxed Cayley structure, some of these favorable properties are preserved.

\subsection{Main contribution}
In this paper, we discuss how the graphon-driven Fourier transform can be used to provide suitable instance-independent Fourier transforms for samples of SBM, which we call an \emph{SBM-driven Fourier transform} for signals on such graphs. 
As a key example, we investigate the case of SBM with group theoretic symmetries. 
Following the approach in \cite{RuizChamonRibeiro20}, we use the adjacency matrix (instead of the Laplacian matrix) as our shift operator for developing the desired GFT scheme.

Our contribution is three-fold. Firstly, we show how the problem of computing the eigenspace decomposition of an SBM can be turned into a low-dimensional problem, and how the group Fourier basis can be used to find a Fourier basis for SBM with Cayley structure and uniform block sizes. Secondly, we show that relatively small changes in the block sizes
do not affect the obtained eigen-decomposition dramatically. 
Thirdly, we find a basis for stochastic block models with an underlying Cayley structure, even if the symmetry is broken due to unequal block sizes.
In each scenario, the Fourier basis associated with the SBM defines a SBM-driven Fourier transform for graphs generated from the SBM.

\subsection{Organization of the paper}
In \cref{sec:notation}, we collect notations and background regarding the stochastic block model, graphons, and graph/graphon Fourier transforms. We end this section by quoting \cref{thm:convergence}, which lays the groundwork for developing graphon-driven GSP.
In \cref{sec:main}, we discuss the special case of the graphon-driven Fourier transform for samples of an SBM.
Namely, we present a method to construct Fourier basis for an SBM using the model matrix and probability matrix associated with the SBM (\cref{subsec:computing-basis}), and show how it yields an SBM-driven Fourier transform for sampled graphs (\cref{Appendix-sec:implement}).
We use this method in \cref{sec:cayley-general} to construct Fourier basis for SBM with a Cayley structure.
In particular, we construct the SBM Fourier basis in the two cases of equal block sizes (\cref{prop:GSP-Cayley-uniform}) and arbitrary block sizes (\cref{thm: general-decomp}).
To establish the robustness of the Fourier transform from \cref{prop:GSP-Cayley-uniform}, we dedicate \cref{subsec:perturbations} to investigate the effect of the block size perturbations on the SBM Fourier transform.
In \cref{thm: general-decomp}, we apply these results to the case of a Cayley SBM with Abelian underlying group. It turns out that non-uniformity of block sizes can be interpreted as weighted Fourier transform (\cref{remark-waighted}).
We conclude the paper with an example on SBMs with $\bbZ_5$ symmetries in \cref{sec:experiment}.

\section{Notations and background}
\label{sec:notation}
In this section, we collect the relevant information regarding the stochastic block model, graphons, and graph/graphon Fourier transforms. 

For any positive integer $n$, let $[n]$ denote the set $\{1,\ldots,n\}$. For integers $n,m$, we use the notation $[n,m]$ to denote the set $\{n,\ldots,m\}$.
We think of $\bbC^n$ as the vector space, equipped with the inner product $\langle x,y\rangle=\sum_{i=1}^n x_i\overline{y_i}$. Similarly, $L^2[0,1]$ denotes the vector space of square-integrable, measurable functions equipped with inner product $\langle f,g\rangle_{L^2[0,1]}=\int f(x) \overline{g(x)}dx$. We use $\|\cdot\|_\opr$ to denote the operator norm of a matrix acting between the appropriate $L^2$ spaces.
Throughout the paper, uppercase letters denote matrices, while lowercase letters (e.g., $x,y,$ etc.) denote vectors.

\subsection{Graphons and the stochastic block model}
\label{sec:SBM}

The \emph{stochastic block model (SBM)} is a random graph model defined by a sequence of \emph{block sizes} $k_1, \ldots, k_n\in \mathbb{N}$ and  a symmetric $n\times n$ \emph{probability matrix} $A$. In this model, graphs are generated as follows:  
Initially, $m=\sum_{i=1}^n k_i$ vertices are created and partitioned into sets $B_1,\dots,B_n$, referred to as \emph{blocks}, according to the prescribed block sizes; 
then, for each $i,j\in [n]$, and for all $u\in B_i$, $v\in B_j$, edges $uv$ are independently added with probability $a_{i,j}$.
In order to create converging sequences of block model graphs, we will 
fix a probability measure $\mu $ on $[n]$, representing the fraction of vertices contained in each block. 
For a given $n\times n$ matrix $A$ and measure $\mu$ on $[n]$, we can then consider a sequence $\{G_N\}_N$ of graphs of vertex size $N$  formed according to the stochastic block model defined by probability matrix $A$ and block sizes $k_i=\mu_i N$ ($1\leq i\leq n$). To simplify the exposition, we assume that $\mu_iN$ is an integer for every $i$.  We will say that {$G_N\sim \SBM(A,\mu,N)$}. 

\begin{remark}
Our exposition can easily be extended to remove the restriction that $\mu_iN$ is an integer.  Namely, we can take $\{k_i\}_{i=1}^n$ so that $|k_i-\mu_iN|<1$ for all $i\in [n]$ and $\sum_{i=1}^n k_i=N$. The asymptotic results in this paper will remain true.
\end{remark}

Recall that a graphon is a measurable symmetric function $w:[0,1]^2\to [0,1]$. 
The appropriate norm on the space of graphons is the \emph{cut norm},  introduced in \cite{cut-norm} and denoted by $\|\cdot\|_\Box$. For an integrable function $f:[0,1]^2\to \bbR$, the cut norm is defined as follows:
$$\|f\|_\Box:=\sup_{S,T\subseteq [0,1]}\left|\iint_{S\times T}f(x,y)\, dx\, dy\right|,$$
where $S$ and $T$ are taken over all measurable subsets of $[0,1]$.

We can represent matrices and graphs by graphons as described below.
\begin{definition}[Graphs and matrices as graphons]
\label{exp-graphons1}
         An $N\times N$ matrix $A=[a_{i,j}]$ can be represented as a graphon $w_A$ by setting $w_A$ equal to $a_{i,j}$ on $[\frac{i-1}{N}, \frac{i}{N})\times [\frac{j-1}{N}, \frac{j}{N})$ for every $i,j\in [N]$.
         Any labeled graph $G$ can be represented by the ($\{0,1\}$-valued) graphon, denoted by $w_G$, corresponding to its adjacency matrix.
\end{definition}
\begin{definition}[SBMs as graphons]
\label{exp-graphons2}
        An SBM given by $n\times n$ probability matrix $A$ and measure $\mu$ on $[n]$ can be presented as a graphon $w_{A,\mu}$ defined as follows. Let $\{I_j\}_{j=1}^n$ be a partition of $[0,1]$ into consecutive intervals so that, for all $j$, $|I_j|=\mu_j$. Then for all $1\leq i\leq j\leq n$ and for all $s\in I_i$, $t\in I_j$, $$w_{A,\mu}(s,t)=w_{A,\mu}(t,s)=a_{i,j}.$$ 
\end{definition}

Let $A$ and $\mu$ be as in \cref{exp-graphons2}.
A sequence $\{ G_N\}_N$, when $G_N\sim {\SBM}(A,\mu, N)$ and $N\to \infty$,  almost surely forms a convergent graph sequence in the sense of graph limit theory (see \cite{lovaszszegedy2006}). Precisely, with suitable labelings of the graphs $\{G_N\}_N$, the sequence  $\{ w_{G_N}\}_N $ of associated graphons converges in cut norm to the graphon $w_{A,\mu}$.

Sometimes it will be useful to represent $w_{A,\mu}$ in the form of a matrix of a certain size. This motivates our introduction of the model matrix $W$.
\begin{definition}[Model matrix]\label{def:model-matrix}
 For $i\in [n]$, let $k_i=\mu_iN$. For any two positive integers $s,t$, let $J_{s,t}$ denote the $s \times t$ matrix consisting entirely of ones. 
Let
$W=[a_{i,j}J_{k_i,k_j}]$ be the matrix obtained by replacing every entry $a_{i,j}$ in $A$ by a block $a_{i,j}J_{k_i,k_j}$. 
We call  $W$ the \emph{model matrix} of $\SBM(A,\mu,N) $. For any two vertices $u$ and $v$ in  $G_N\sim {\SBM}(A,\mu, N)$, the probability that $u$ and $v$ are adjacent is given by $W_{u,v}$. 
\end{definition}

\subsection{The graph Fourier transform}
\label{sec:graphFT}
For a fixed graph $G$ with vertex set $V(G)$, a \emph{graph signal} on $G$ is a function $f:V(G)\to {\mathbb C}$. If the vertex set $V(G)$ is labeled, say $\{v_i\}_{i=1}^N$, then the graph signal can be represented as a column vector $\left[f(v_1), f(v_2),\ldots, f(v_N)\right]^{\rm T}$ in ${\mathbb C}^N$, where ${\rm T}$ denotes the matrix transpose.  

To define the \emph{graph Fourier transform}, we first assign a shift operator to the underlying graph.
The transform is then given by the expansion of signals onto a fixed eigenbasis of this graph shift operator.
Common choices for the shift operator include the graph adjacency matrix and the graph Laplacian. 
The adjacency matrix of a graph $G$ with $N$ nodes is a $0/1$-valued matrix $A_G$ of size $N$, whose $(i,j)$-th entry is 1 precisely when the vertices $v_i$ and $v_j$ are adjacent. The Laplacian of $G$, denoted by $L_G$, is the $N\times N$ matrix
given by $L_G= D_G-A_G$, where $D_G$ is the diagonal matrix with entries $d_{ii}$ equal to the degree of vertex $v_i$.
The selection of the graph shift operator significantly influences the properties of the resulting graph Fourier transform (see \cite[Chapter 3]{ortega-book}). For an overview of various shift operators used for developing graph Fourier transform, see \cite{SandryhailaMoura13} and references therein.

In this paper, we take the adjacency matrix as our graph shift operator. Our choice is due to the fact that the spectral features of the adjacency matrices associated with a converging sequence of graphs converge in an appropriate sense (see \cite{szegedy-spectra}). This phenomenon allows us to leverage the graph limit theory to produce a consistent graphon-driven Fourier transform. 
Our choice of shift operator matches the choice in the definition of graphon Fourier transform proposed in \cite{RuizChamonRibeiro20} or the graphon neural networks in \cite{neural1}.

Being real symmetric matrices, adjacency matrices are unitarily diagonalizable. Let $A_G=U^*\Lambda U$, where $U$ is a unitary matrix and $\Lambda$ is the diagonal matrix whose diagonal entries are the eigenvalues of $A_G$. Given a graph signal $x\in \bbC^N$ on $G$, the graph Fourier transform of $x$ is defined as
    \begin{equation}
        \label{eq:GFT}
        \widehat{x}= Ux.
    \end{equation}
    Given $\widehat{x}$, we can retrieve the original signal via the inverse Fourier transform defined as
    \begin{equation}
        x=U^* \widehat{x}.
    \end{equation}
 For a thorough discussion on graph Fourier transform and its applications, see for example~\cite{2018:Ortega:GSPOverview,ortega-book}.
\subsection{The graphon Fourier transform}
\label{sec:graphonFT}
In this section, we introduce the graphon Fourier transform, a transform on signals in $L^2[0,1]$ that serves as the foundation for the graphon-driven Fourier transform on graph signals.
The viability of the graphon-driven approach is based on \Cref{thm:convergence} which establishes the convergence of the graph FT to the graphon FT. This result is central to our approach, so we give a precise statement of the theorem. In order to do so, we need to introduce the notations necessary for graphon FT.

Similar to the graph Fourier transform, the \emph{graphon Fourier transform} is defined through the spectral decomposition of the related graphon, or to be precise, the spectral decomposition of the associated operator.
Every graphon $w:[0,1]^2\to [0,1]$ acts as the kernel of an integral operator on the Hilbert space $L^2[0,1]$ as follows:
$$T_w: L^2[0,1]\to L^2[0,1], \quad T_w(\xi)(x)=\int_{[0,1]} w(x,y)\xi(y)\, dy, \mbox{ for } \xi\in L^2[0,1], x\in [0,1].$$ 
An $L^2$-function is a $\lambda$-eigenfunction of $T_w$ (or $\lambda$-eigenfunction of $w$ for short) if
$T_wf=\lambda f$ almost everywhere. 
For any graphon $w$, $T_w$ is a self-adjoint compact operator, and its operator norm is bounded by 1. Thus, $T_w$ has a countable spectrum lying in the interval $[-1,1]$ for which 0 is the only possible accumulation point. 
Let $n^+$ and $n^-$ in ${\mathbb N}_*:={\mathbb N}\cup \{\infty\}$ denote the number of positive and negative eigenvalues of $T_w$ respectively. 
We label the nonzero eigenvalues of $T_w$ as follows:
\begin{eqnarray}
\label{eq:ordering}
1\geq \lambda_1(w)\geq \lambda_2(w)\geq \ldots  >0 \ \mbox{ and } \ 0> \ldots\geq \lambda_{-2}(w)\geq \lambda_{-1}(w)\geq -1
\end{eqnarray}
Note that if $n^+$ (resp.~$n^-$) is finite, the corresponding chain of inequalities is a finite chain. 

From the spectral theory for compact operators, we have that $L^2[0,1]$ admits an orthonormal basis containing eigenfunctions of $T_w$. Let $I_w\subseteq {\mathbb Z}\setminus\{0\}$ be the indices in \eqref{eq:ordering} enumerating the nonzero (repeated) eigenvalues of $T_w$. Let $\{\phi_i\}_{i\in I_w}$ be an orthonormal collection of associated eigenfunctions. Then the spectral decomposition of $T_w$ is given as follows.
\begin{equation}\label{eq:spectral-decom}
T_w=\sum_{i\in I_w}\lambda_i(w) \, \phi_i\otimes \phi_i,
\end{equation}
where $\phi_i\otimes \phi_i$ denotes the rank-one projection  defined as $(\phi_i\otimes\phi_i)(\xi)=\langle\xi,\phi_i\rangle\phi_i$ for every $\xi\in L^2[0,1]$.
Clearly, $\{\phi_i\}_{i\in I_w}$ forms an orthonormal basis for the image of $T_w$.

We can now define graphon signals and the graphon Fourier transform.
The concept of graphon Fourier transform was first introduced in \cite{RuizChamonRibeiro20} for graphons without any repeated eigenvalues. We quote the definition from a slightly generalized version proposed in \cite{Ghanehari-Janssen-Kalyaniwalla}, where the condition on eigenvalues was dropped.
\begin{definition}[Graphon Fourier transform]\cite[Definition 3.4]{Ghanehari-Janssen-Kalyaniwalla}
\label{def:graphonFT}
Let $w:[0,1]^2\to [0,1]$ be a graphon, and consider the spectral decomposition of the associated integral operator as in \eqref{eq:spectral-decom}.
\begin{itemize}
\item[(i)] Any square-integrable measurable function $f\in L^2[0,1]$ is called a \emph{graphon signal} on $w$.
\item[(ii)] For any distinct eigenvalue $\lambda$ of $T_w$, let $I_\lambda(w)$ denote the set of indices $i$ so that $\lambda_i(w)=\lambda$.
\item[(iii)] The \emph{graphon Fourier transform} of a signal $f$ is the projection of $f$ onto eigen-spaces of $T_w$. More precisely, the \emph{Fourier projection} of $f$ corresponding to $\lambda$, denoted by $\widehat{f}(\lambda)$, is defined as
\begin{equation}\label{eqn:FT_Proj}
\widehat{f}(\lambda) = \sum_{i\in I_\lambda(w)} (\phi_i\otimes\phi_i)(f)=\sum_{i\in I_\lambda(w)} \langle f,\phi_i\rangle \,\phi_i.
\end{equation}
In addition, $\widehat{f}(0)$ is the projection of the signal onto the null space of $T_w$.
\end{itemize}
\end{definition}
In order to state the convergence result that motivates this work,
we need to view graph signals as elements of the space $L^2[0,1]$ of graphon signals.
\begin{definition}\label{def:vector-to-function}
Let $x$ be  a signal on a graph $G$ with labeled vertex set $[N]$. We associate the step-function $f_{x}\in L^2[0,1]$ to the graph signal $x$ defined as:
\begin{equation}\label{eq:vectorstofunctions}
f_{x}(t)={\sqrt{N}}x_j\text{ for all }\ t\in \left[\frac{j-1}{N},\frac{j}{N}\right],\, j\in [N].
\end{equation}

\end{definition}
Here $\sqrt{N}$ is the scaling factor which ensures that $\|f_{x}\|_{L^2[0,1]}=\|x\|_{\bbC^N}$.

\begin{remark}\label{rem:eigen-matrix-graphon}
The spectral features of an $N\times N$ matrix $A=[a_{i,j}]$ and its associated graphon $w_A$ (\cref{exp-graphons1}) are closely related. 
Let $x\in \bbC^N$ be a vector, and let $f_x$ denote the representation of $x$ as a function in $L^2[0,1]$ as given in \eqref{eq:vectorstofunctions}. It is easy to verify that 
$(Ax)_i=\sqrt{N}(T_{w_A}f_x)(t)\text{ for all }t\in I_i.$
Consequently, for $\lambda\neq 0$, the map $x\mapsto f_x$ forms a one-to-one correspondence between $\lambda N$-eigenvectors of $A$ and $\lambda$-eigenfunctions of $T_{w_A}$.
\end{remark}

Let $\{G_n\}_n$ be a sequence of labeled graphs of increasing size converging to the graphon $w$ in cut norm.
For each $n$, let $\{\lambda_i{(G_n)}\}_{i\in {\mathbb Z}}$ be the sequence of eigenvalues of the adjacency matrix of $G_n$ ordered as in \eqref{eq:ordering}, and padded by 0 to create an infinite sequence.  
By \cite[Theorem 11.54]{lovasz-book},  we have 
$$\lim_{n\to \infty}\frac{\lambda_i(G_n)}{|V(G_n)|}=\left\{\begin{array}{cc}
   \lambda_i(w)  & \mbox{ if } i\in I_w \\
  0   & \mbox{otherwise}
\end{array}\right.$$

We can now formulate the convergence of the GFT to the graphon Fourier transform.
The following theorem was first proved in \cite{RuizChamonRibeiro20} for graphons that do not admit any repeated eigenvalues.

\begin{theorem}(\cite[Theorem 3.7]{Ghanehari-Janssen-Kalyaniwalla})\label{thm:convergence}
Let $\{G_n\}_n$ be a graph sequence converging to a graphon $w$ in cut norm, and $\{x_n\}_n$ be a sequence of graph signals on each of the $G_n$, such that the corresponding sequence of step-functions $\{f_{{x}_n}\}_n$ converges to a graphon signal $f$ in $L^2[0,1]$. With the notation introduced in this section, we  have that for each distinct nonzero eigenvalue $\lambda$ of $T_w$,
\begin{equation}\label{eq:converg}
 \left(\sum_{i\in I_\lambda(w)}
 \phi^n_i\otimes \phi^n_i\right) ({f_{x_n}})\rightarrow  
 \widehat{f}(\lambda)\mbox{ in } L^2[0,1] \mbox{ as }  n\rightarrow \infty,
\end{equation}
where each $\phi^n_i$ is an eigenvector of $G_n$ corresponding to $\lambda_i(G_n)$, represented as a step-function.
Recall that $I_\lambda(w)$ is the set of indices $j$ such that $\lambda_j(w)=\lambda$.
\end{theorem}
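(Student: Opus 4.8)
The plan is to reduce the statement to operator-norm convergence of the relevant spectral projections, and then to read off the conclusion. Write $A_n := T_{w_{G_n}}$ and $A := T_w$ for the integral operators, and set $P^n_\lambda := \sum_{i\in I_\lambda}\phi^n_i\otimes\phi^n_i$ and $P_\lambda := \sum_{i\in I_\lambda}\phi_i\otimes\phi_i$. By \cref{rem:eigen-matrix-graphon}, the step functions $\phi^n_i$ are orthonormal eigenfunctions of $A_n$ with eigenvalues $\lambda_i(G_n)/|V(G_n)|$, so $P^n_\lambda$ is the orthogonal spectral projection of $A_n$ onto the eigenvalues indexed by $I_\lambda$, while $P_\lambda$ is exactly the Fourier projection of \cref{def:graphonFT}, i.e.\ $\widehat f(\lambda)=P_\lambda f$. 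Hence it suffices to prove $\norm{P^n_\lambda-P_\lambda}_\opr\to 0$: applying the operators to $f$ gives the literal claim directly, and since each $P^n_\lambda$ is an orthogonal projection with $\norm{P^n_\lambda}_\opr\le 1$, the estimate $\norm{P^n_\lambda f_{Y_n}-\widehat f(\lambda)}_{L^2}\le \norm{P^n_\lambda}_\opr\norm{f_{Y_n}-f}_{L^2}+\norm{(P^n_\lambda-P_\lambda)f}_{L^2}$ also lets the signal-convergence hypothesis $f_{Y_n}\to f$ enter, should one wish to project the perturbed signal. Thus the whole content is the operator-norm convergence of the projections.

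The first and key step is to upgrade cut-norm convergence to operator-norm convergence of the operators themselves, namely $\norm{A_n-A}_\opr\to 0$. This is where the uniform bound $0\le w_{G_n},w\le 1$ is indispensable (the statement is false for unbounded kernels). Since $A_n-A$ is self-adjoint,
\[\norm{A_n-A}_\opr=\sup_{\norm{g}_{L^2}=1}\abs{\iint (w_{G_n}-w)(x,y)\,g(x)\,g(y)\,dx\,dy}.\]
Given a unit vector $g$, I would split it as $g=g^{\mathrm{low}}+g^{\mathrm{high}}$, where $g^{\mathrm{low}}$ equals $g$ on $\set{\abs g\le M}$ and $0$ elsewhere. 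On the bounded part the quadratic form is controlled by the cut norm, $\abs{\iint (w_{G_n}-w)\,g^{\mathrm{low}}(x)g^{\mathrm{low}}(y)\,dx\,dy}\le CM^2\norm{w_{G_n}-w}_\Box$ for an absolute constant $C$ (from the equivalence of $\norm{\cdot}_\Box$ with the $L^\infty\to L^1$ norm), which vanishes for fixed $M$; on the spiky part, Chebyshev's inequality bounds the support $\abs{\set{\abs g>M}}\le M^{-2}$, so using only $\abs{w_{G_n}-w}\le 1$ together with Cauchy--Schwarz the remaining two terms are $O(1/M)$, uniformly in $g$ and $n$. Choosing $M$ large and then $n$ large makes the supremum arbitrarily small, giving $\norm{A_n-A}_\opr\to 0$; via the min--max principle this also recovers the eigenvalue convergence quoted from \cite{lovasz-book}.

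The second step is standard Riesz-projection perturbation theory. Since $A$ is compact, self-adjoint, and $\lambda\neq 0$, the eigenvalue $\lambda$ is isolated with finite multiplicity $\abs{I_\lambda}$, so I can fix a circle $\Gamma$ centered at $\lambda$ enclosing no other point of $\mathrm{spec}(A)\cup\set{0}$. For large $n$, operator-norm convergence keeps $\Gamma$ off $\mathrm{spec}(A_n)$ and, together with the quoted convergence $\lambda_i(G_n)/|V(G_n)|\to\lambda_i(w)$, ensures that the eigenvalues of $A_n$ enclosed by $\Gamma$ are precisely those indexed by $I_\lambda$. Representing both projections as contour integrals,
\[P^n_\lambda=\frac{1}{2\pi i}\oint_\Gamma (zI-A_n)^{-1}\,dz,\qquad P_\lambda=\frac{1}{2\pi i}\oint_\Gamma (zI-A)^{-1}\,dz,\]
the resolvent identity $(zI-A_n)^{-1}-(zI-A)^{-1}=(zI-A_n)^{-1}(A_n-A)(zI-A)^{-1}$ and the uniform bound on $\norm{(zI-A)^{-1}}_\opr$ over the compact contour $\Gamma$ yield $\sup_{z\in\Gamma}\norm{(zI-A_n)^{-1}-(zI-A)^{-1}}_\opr\to 0$, hence $\norm{P^n_\lambda-P_\lambda}_\opr\to 0$, as required.

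I expect the main obstacle to be the first step, and in particular the repeated-eigenvalue case, which is exactly the improvement over \cite{RuizChamonRibeiro20} and the reason the theorem must be phrased in terms of projections: when $\abs{I_\lambda}>1$ the individual eigenfunctions $\phi^n_i$ may rotate arbitrarily within the near-$\lambda$ eigenspace of $A_n$ and need not converge, whereas the total projection $P^n_\lambda$ onto that eigenspace does, and the Riesz integral is precisely the device that tracks the whole eigenspace at once. The secondary delicate point is confirming that no extra eigenvalue of $A_n$ drifts into $\Gamma$ and that the enclosed multiplicity is exactly $\abs{I_\lambda}$; here I would lean on the quoted eigenvalue-convergence statement to pin down the indexing.
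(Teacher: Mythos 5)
The paper does not actually prove \cref{thm:convergence}: it is imported verbatim from \cite[Theorem 3.7]{Ghanehari-Janssen-Kalyaniwalla}, so there is no internal proof to compare against, and your proposal supplies what the paper omits. Your argument is correct and self-contained: the truncation argument upgrading cut-norm convergence of uniformly bounded kernels to operator-norm convergence $\|T_{w_{G_n}}-T_w\|_\opr\to 0$ is sound (the bounded part contributes $O(M^2\|w_{G_n}-w\|_\Box)$, the three spiky terms $O(1/M)$), and the Riesz contour-integral perturbation then gives $\|\sum_{i\in I_\lambda}\phi^n_i\otimes\phi^n_i-\sum_{i\in I_\lambda}\phi_i\otimes\phi_i\|_\opr\to 0$, with the quoted eigenvalue convergence (or Weyl's inequality, which your operator-norm convergence already implies) combined with the monotone indexing of \eqref{eq:ordering} pinning down that the eigenvalues of $T_{w_{G_n}}$ enclosed by the contour are exactly those indexed by $I_\lambda$. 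This matches the strategy underlying the cited result---handling repeated eigenvalues by tracking whole spectral projections rather than individual eigenvectors---and your side observation is also accurate: as stated, the conclusion applies the projections to the limit signal $f$, so the hypothesis $f_{Y_n}\to f$ in $L^2[0,1]$ is only needed for the variant in which one projects the sampled signals themselves.
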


The graph Fourier transform projects a graph signal onto an eigenbasis of the graph adjacency matrix. \Cref{thm:convergence} states that the GFT on a graph sampled from a graphon $w$ can be approximated by projection onto the appropriate eigenspaces of $w$. 
The graphon Fourier transform can inform a graphon-driven FT applicable to graph signals. 
We give a rather general example below. 
\begin{example}[A proposed graphon-driven FT]
Let $w$ be a graphon, and suppose that every eigenvalue of $w$ is simple. 
Let  $\{\phi_i\}_{i}$ be an orthonormal collection of eigenfunctions associated with nonzero eigenvalues of $T_w$.
Fix $\delta>0$, and define $J_\delta:=\{i\in {\mathbb Z}: |\lambda_i(w)|>\delta\}$ 
and $N:=|J_\delta|$. 

Let $G$ be a graph on $N$ vertices sampled from $w$.
For a signal $x\in\bbC^N$ on $G$, the graphon-driven FT is defined as follows:
$$\widehat{x}(\lambda_i)=\langle f_x,\phi_i\rangle, \ \mbox{ for } i\in J_\delta,$$
where $f_x\in L_2[0,1]$ is the associated step function to $x$, as in Definition \ref{def:vector-to-function}.

Since $G$ is sampled from $w$, we know that $\|w_G-w\|_\Box$ is small. 
\Cref{thm:convergence} implies that the $i$-th Fourier coefficient of $x$ (when the precise graph Fourier transform obtained from the eigen-decomposition of $A_G$ is applied) can be approximated with the graphon Fourier coefficient $\widehat{f_x}(\lambda)$. 
Note that the graphon-driven Fourier transform only depends on the graphon, and not on the particular sample $G$ of $w$. 
That is, for different samples of $w$ we do not need to calculate the precise eigen-decomposition for each adjacency matrix.
\end{example}

\begin{remark}
The situation is more subtle when eigenvalues of $w$ have higher multiplicity; see, for example,~\cite[Example 3.11]{Ghanehari-Janssen-Kalyaniwalla}. 
In this case, \cref{thm:convergence} suggests that the graphon-driven FT should be formulated in terms of  projections onto the space spanned by eigenvectors associated with eigenvalues that converge to the same limit.
For instance, suppose that $\lambda$ is an eigenvalue of $w$ with multiplicity $2$, so that $\lambda=\lambda_1(w)=\lambda_2(w)$. Then the sequences
\[
\left\{ \frac{\lambda_1(G_n)}{|V(G_n)|}\right\}_n
\quad \text{and} \quad
\left\{\frac{\lambda_2(G_n)}{|V(G_n)|}\right\}_n
\]
both converge to $\lambda$. In general, however, the individual GFT coefficients associated with $\lambda_1(G_n)$ and $\lambda_2(G_n)$ need not converge. Rather, \cref{thm:convergence} states that the projection of a signal onto the space spanned by the eigenvectors of $G_n$ corresponding to $\lambda_1(G_n)$ and $\lambda_2(G_n)$ converges to the projection of that signal onto the $\lambda$-eigenspace of $T_w$. 
\end{remark}

A detailed discussion on graphon-driven FT is beyond the scope of this paper. 
Instead, we focus on the special case of stochastic block models in this paper.

\section{The SBM-driven Fourier transform}\label{sec:main}
In this section, we particularize the concept of the graphon-driven Fourier transform to the case of stochastic block models taking the associated model matrix into account. The graphon-driven Fourier transform specialized to the setting of SBMs is referred to as the \emph{SBM-driven Fourier transform}.

Consider a sequence $\{ G_N\}_N$ of graphs of increasing size, where each $G_N$ is sampled from a stochastic block model with probability matrix $A $ and measure $\mu$. As noted earlier, the sequence $\{ G_N\}_N$ converges almost surely to the graphon $w_{A,\mu}$ associated with the SBM.
Motivated by the graphon-driven FT, we define the SBM-driven Fourier transform through the eigenbasis of $w_{A,\mu}$.
In this section, we discuss how to obtain this basis and analyze its sensitivity to changes in the block sizes. 

\subsection{SBM and model matrix}
\label{subsec:GraphonModelMatrix}
Consider a stochastic block model $\SBM (A,\mu,N)$, and assume that $\mu_iN\in \bbN$ for all $i\in [n]$.
Under this assumption, we have that $w_{A,\mu}$ is the graphon representation of the model matrix $W$ 
(see \cref{def:model-matrix} for the definition of the model matrix and \cref{exp-graphons1} for the definition of the graphon representation of a matrix).
Given the correspondence between eigenspaces of $W$ and $T_{w_{A,\mu}}$ (\cref{rem:eigen-matrix-graphon}), the SBM-driven Fourier transform will be defined via  an eigenbasis of the model matrix $W$ as follows.

\begin{definition}[SBM-driven Fourier transform]
    Suppose a graph $G$ is sampled from the stochastic block model $\SBM (A,\mu,N)$, and let $x\in \bbC^N$ be a signal on $G$. For an eigenvalue  $\lambda$ of $W$, define the SBM-driven Fourier transform as follows:
    $$\widehat{x}(\lambda) =\left\{\begin{array}{cc}
      P_\lambda (x)   &  \lambda\neq 0\\
       x-\sum_{\lambda\in \sigma(W), \lambda\not=0}P_\lambda(x),  & \lambda=0
    \end{array} \right.,$$
    where $P_{\lambda}$ is the projection, on $\bbC^N$, onto the $\lambda$-eigenspace of $W$.
    Here $\sigma(W)$ denotes the set of eigenvalues of $W$.
\end{definition}

In practice, and in accordance with experimental observations, eigenvectors corresponding to large eigenvalues 
are often regarded as smooth (when the graph Fourier transform is defined via the adjacency matrix), while eigenvectors corresponding to large negative eigenvalues reveal contrast.  Signal compression is achieved by retaining only the Fourier coefficients associated with eigenvalues of large magnitude \cite[Section VI]{SM-2013}. Intuitively, this can be understood from the fact that the shift operator is typically interpreted as describing the evolution of a signal over time. Accordingly, eigenvectors associated with eigenvalues of small magnitude correspond to components that tend to dissipate.

In the graphon-driven Fourier approach, the Fourier projection of $x$ corresponding to eigenvalue 0 is the projection of $f_x$ onto the kernel of $T_{w_{A,\mu}}$. 
However, contrary to other eigenspaces, the kernel of $T_{w_{A,\mu}}$ and of $W$ do not correspond exactly.
It follows easily from the definition of $w_{A,\mu}$ (and also from \cref{prop:lin-alg-compression} below) that $T_{w_{A,\mu}}$ has rank at most $n$.
Therefore, $T_{w_{A,\mu}}$ has eigenvalue zero with infinite multiplicity. The rank of $W$ equals that of $T_{w_{A,\mu}}$, so the dimension of the kernel of $W$ is finite but grows with $N$, more precisely, we have $N-n\leq \dim(\ker{W})\leq N$.

\subsection{Computing the SBM Fourier basis}\label{subsec:computing-basis}
In this section we will see  that the eigen-decomposition of the model matrix $W$  is closely related to that of a much smaller matrix $A_\mu$, which we now define.
\begin{definition}\label{def:M-A-mu-D}
Let $\SBM(A,\mu,N)$ be an SBM with an $n\times n$ probability matrix $A$. We introduce the following matrices.
\begin{enumerate}[label=(\alph*)]
\item\label{eq:M-weight} The diagonal matrix $M:={\rm diag}(\mu_1,\ldots,\mu_n)$ is called the \emph{weight matrix}.
\item\label{A-mu} The $n\times n$ matrix $A_\mu := \sqrt{M}A\sqrt{M}$ is called the \emph{weighted probability matrix}.
\item\label{def:D} The $N \times n$ matrix $D$ is defined in block form as
    $D={\rm diag}(J_{k_1,1}, J_{k_2,1},\ldots, J_{k_n,1})$,
    where $J_{k_i,1}$ is the column vector in ${\mathbb C}^{k_i}$ whose entries are all ones.
\item\label{def:V-isom} The $N\times n$ matrix $V$ is defined as 
$V:=\frac{1}{\sqrt{N}}DM^{-\frac{1}{2}}.$
\end{enumerate}
\end{definition}
Note that for any vector $x\in \bbC^n$, $Dx$ is the \lq\lq blow-up\rq\rq  vector in {the high-dimensional space} $\bbC^N$ obtained from $x$ by repeating each entry $x_i$ exactly $k_i$ times.

\begin{lemma}\label{lem:relations-between-matrices}
With notations as given above in 
\cref{def:M-A-mu-D}, and model matrix $W$ as in \cref{def:model-matrix}, we have the following.
\begin{enumerate}[label=(\alph*)]
\item\label{part-lem:W=DADt-and DtD=NM} $W = DAD^\t$ and $\frac{1}{N} D^\t D = M$.
\item\label{lem-part:DM^{-1/2} is an isometry}  The operator $V$ is an isometry, i.e., $V^\t V=I$.
\item\label{lem-part:VA-muV-W} $VA_\mu V^\t=\frac{1}{N}W$
and $A_\mu =\frac{1}{N}(V^\t WV)$.
\item\label{lem-part:D-norm} $\|D\|_\opr=\sqrt{N\max\{\mu_1,\ldots,\mu_n\}}$.
\end{enumerate}
\end{lemma}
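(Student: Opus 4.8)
The plan is to establish part~\ref{part-lem:W=DADt-and DtD=NM} first by direct bookkeeping on the block structure of $D$, and then to derive parts~\ref{lem-part:DM^{-1/2} is an isometry}--\ref{lem-part:D-norm} as essentially formal consequences, exploiting the cancellation of the factors $M^{\pm 1/2}$.

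For part~\ref{part-lem:W=DADt-and DtD=NM}, I would record the single structural fact driving everything: the $i$-th column of $D$ is the all-ones indicator vector supported on block $B_i$, equivalently, the $u$-th \emph{row} of $D$ equals the standard basis row vector $e_i^\t$, where $i$ is the block index of vertex $u$. Since the blocks $B_1,\ldots,B_n$ are pairwise disjoint, distinct columns of $D$ are orthogonal and the $i$-th column has squared norm $k_i=\mu_i N$; hence $D^\t D=\mathrm{diag}(k_1,\ldots,k_n)=NM$, which is the second identity. For $W=DAD^\t$ I would compute the $(u,v)$ entry directly: if $u\in B_i$ and $v\in B_j$, then $(DAD^\t)_{u,v}=e_i^\t A e_j=a_{i,j}$, which matches $W_{u,v}=a_{i,j}$ from \cref{def:model-matrix}.

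With part~\ref{part-lem:W=DADt-and DtD=NM} in hand the rest is short. Since $M^{-1/2}$ is diagonal, $V^\t=\frac{1}{\sqrt{N}}M^{-1/2}D^\t$, so $V^\t V=\frac{1}{N}M^{-1/2}(D^\t D)M^{-1/2}=M^{-1/2}MM^{-1/2}=I$, giving part~\ref{lem-part:DM^{-1/2} is an isometry}. For part~\ref{lem-part:VA-muV-W}, substituting the definitions of $V$ and $A_\mu=\sqrt{M}A\sqrt{M}$ yields $VA_\mu V^\t=\frac{1}{N}DM^{-1/2}\sqrt{M}A\sqrt{M}M^{-1/2}D^\t=\frac{1}{N}DAD^\t=\frac{1}{N}W$, where the inner $M^{\pm 1/2}$ factors telescope to the identity; the second identity $A_\mu=\frac{1}{N}V^\t WV$ then follows by sandwiching $\frac{1}{N}W=VA_\mu V^\t$ between $V^\t(\cdot)V$ and applying $V^\t V=I$. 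Finally, for part~\ref{lem-part:D-norm} I would use the standard identity $\|D\|_\opr^2=\|D^\t D\|_\opr=\|NM\|_\opr=N\max_i\mu_i$, valid because $NM$ is diagonal with nonnegative entries.

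I do not anticipate a genuine obstacle here: every assertion reduces to elementary matrix algebra once the block structure of $D$ is pinned down. The one place demanding care is the index bookkeeping in part~\ref{part-lem:W=DADt-and DtD=NM}—correctly identifying the rows and columns of $D$ with block-indicator vectors—since the three subsequent identities are purely formal consequences of $D^\t D=NM$, $W=DAD^\t$, and the definition $A_\mu=\sqrt{M}A\sqrt{M}$.
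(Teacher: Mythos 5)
Your proposal is correct and follows essentially the same route as the paper: everything is reduced to the identities $D^\t D = NM$ and $W = DAD^\t$, with parts~(b)--(d) obtained by the same $M^{\pm 1/2}$ cancellations and the identity $\|D\|_\opr^2=\|D^\t D\|_\opr$. The only cosmetic differences are that you spell out the block-structure bookkeeping for part~(a), which the paper dismisses as immediate from the definitions, and you verify $V^\t V=I$ by direct matrix algebra where the paper runs the equivalent computation through inner products.
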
 
\begin{proof}
\cref{part-lem:W=DADt-and DtD=NM} and the first equation in \cref{lem-part:VA-muV-W} follow directly from the definitions. 
To prove \cref{lem-part:DM^{-1/2} is an isometry}, 
   let $x,y\in \bbC^n$ be given. Since $D^\t=D^*$ and $M^{-\frac{1}{2}}=(M^{-\frac{1}{2}})^*$, we get
   \begin{eqnarray*}
        \langle Vx, Vy\rangle_{\bbC^N} &=& \langle \frac{1}{\sqrt{N}}DM^{-1/2}x, \frac{1}{\sqrt{N}}DM^{-1/2}y\rangle_{\bbC^N}
       = \langle \frac{1}{N}D^\t DM^{-1/2}x, M^{-1/2}y\rangle_{\bbC^n}\\
       &=& \langle M^{-1/2}MM^{-1/2}x, y \rangle_{\bbC^n}
       =\langle x,y\rangle_{\bbC^n}.
   \end{eqnarray*}
   Thus, $V$ is an isometry, i.e., $V^*V=I$, which implies that $V^\t V=I$ since $V$ is real-valued. 
   The first equation of \cref{lem-part:VA-muV-W}, together with \cref{lem-part:DM^{-1/2} is an isometry}, immediately proves the second equation of \cref{lem-part:VA-muV-W}.
   Finally, \cref{lem-part:D-norm} can be obtained from \cref{part-lem:W=DADt-and DtD=NM} as follows:
   $$\|D\|^2_\opr=\|D^\t D\|_\opr=N\|M\|_\opr=N\max_{1\leq i\leq n}\mu_i.$$
\end{proof}

It follows from \cref{lem:relations-between-matrices} that $W$ is a multiple of the conjugation of $A_\mu$ by an isometry. 
We will now see that the spectral behavior of $A_\mu$ and $W$ are closely related. 
\begin{proposition}\label{prop:lin-alg-compression}
Let $W$,  $M$ and $A_\mu$ be the model matrix, the weight matrix, and the weighted probability matrix of an SBM defined by $(A,\mu,N)$, and let $D$ and $V$ be as given in \cref{def:M-A-mu-D}.
Let $\lambda\in {\mathbb R}$,  and $x\in {\mathbb C}^n$ and $y\in \bbC^N$ be unit vectors.
Then, 
\begin{enumerate}[label=(\alph*)]
\item\label{part-prop:lin-alg:Y-eigen->X-eigen} Suppose $\lambda\neq 0$.
If $y$ is a unit $\lambda$-eigenvector of $W$ then $y\in {\rm range}(V)$, and  $x=V^\t y$ is a unit
$\frac{\lambda}{N}$-eigenvector of $A_\mu$.
\item\label{part-prop:lin-alg:X-eigen->Y-eigen} If $x$ is a unit
$\frac{\lambda}{N}$-eigenvector of $A_\mu$ then $y=V x$
is a unit $\lambda$-eigenvector of $W$.
\item\label{part-prop:lin-alg:multiplicity}
$\lambda$ is a nonzero eigenvalue of $W$ of multiplicity $t$ if and only if $\frac{\lambda}{N}$ is a nonzero eigenvalue of $A_\mu$ of multiplicity $t$.
\end{enumerate}
\end{proposition}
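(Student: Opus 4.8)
The plan is to treat parts (a) and (b) as the two directions of a single correspondence implemented by the isometry $V$ and its transpose $V^\t$, and then to deduce the multiplicity statement (c) as a dimension count. The only ingredients are the identities from \cref{lem:relations-between-matrices}: that $V^\t V = I$ (so $V$ is an isometry, and, since $V$ is real, $V^\t$ is its adjoint), and that $W = N\,V A_\mu V^\t$, equivalently $A_\mu = \frac{1}{N} V^\t W V$.

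I would prove (b) first, as it is the cleanest and needs no hypothesis on $\lambda$. Given a unit vector $X$ with $A_\mu X = \frac{\lambda}{N} X$, set $Y = VX$. Using $W = N\,V A_\mu V^\t$ and $V^\t V = I$ gives $WY = N\,V A_\mu V^\t V X = N\,V A_\mu X = \lambda\,VX = \lambda Y$, so $Y$ is a $\lambda$-eigenvector of $W$; and $\|Y\| = \|VX\| = \|X\| = 1$ since $V$ is an isometry.

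For (a), the essential point — and the one place the hypothesis $\lambda\neq 0$ is used — is to first show $Y \in {\rm range}(V)$. From $WY=\lambda Y$ and $W = N\,V A_\mu V^\t$ I would write $Y = \frac{1}{\lambda} WY = \frac{N}{\lambda}\,V\!\left(A_\mu V^\t Y\right)$, which exhibits $Y$ as $V$ applied to a vector; this division by $\lambda$ is exactly what fails when $\lambda=0$. Once $Y=VZ$ for some $Z$, the isometry relation gives $V V^\t Y = V V^\t V Z = V Z = Y$. Setting $X = V^\t Y$, I then compute $A_\mu X = \frac{1}{N} V^\t W V V^\t Y = \frac{1}{N} V^\t W Y = \frac{\lambda}{N} V^\t Y = \frac{\lambda}{N} X$, and $\|X\| = \|V^\t Y\| = \|Z\| = \|Y\| = 1$, because $V^\t$ inverts $V$ on its range.

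Finally, for (c), I observe that the computations in (a) and (b) are linear and do not actually require unit norm, so $V$ carries the $\frac{\lambda}{N}$-eigenspace of $A_\mu$ into the $\lambda$-eigenspace of $W$, and $V^\t$ carries the latter into the former. For $\lambda\neq 0$ these two maps are mutually inverse: $V^\t V = I$ on the $A_\mu$-eigenspace, and $V V^\t = I$ on the $W$-eigenspace because every such eigenvector lies in ${\rm range}(V)$ by (a). Hence $V$ restricts to a linear isomorphism between the two eigenspaces, forcing their dimensions — the multiplicities $t$ — to be equal. The only genuine obstacle is the range-membership claim in (a); once that is secured, everything else reduces to bookkeeping with the isometry identities.
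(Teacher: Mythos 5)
Your proof is correct and follows essentially the same route as the paper: the same two identities $V^\t V=I$ and $W=N\,VA_\mu V^\t$, the same division-by-$\lambda$ trick to show $Y\in\mathrm{range}(V)$ in part (a), and the same computation in part (b). The only cosmetic difference is in part (c), where you count dimensions via the mutually inverse restrictions of $V$ and $V^\t$, while the paper transports orthonormal bases in both directions to obtain the two multiplicity inequalities --- these are interchangeable formulations of the same argument.
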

\begin{proof}
       To prove \cref{part-prop:lin-alg:Y-eigen->X-eigen}, suppose that $y$ is a unit $\lambda$-eigenvector of $W$, i.e, $Wy = \lambda y$. 
    Since $W = N(VA_\mu V^\t)$ (by \cref{lem:relations-between-matrices} \cref{lem-part:VA-muV-W}) and $y=\frac{1}{\lambda} Wy$, we get $y\in {\rm range}(V)$. Consequently, since $V^\t$ restricted to ${\rm range}(V)$ is an isometry, the vector $x=V^\t y$ is a unit vector as well. More precisely, since  $y=V\tilde{y}$ for some $\tilde{y}\in\bbC^n$ {and $V$ is an isometry}, we have
    $$\|x\|=\|V^\t y\|=\|V^\t V\tilde{y}\|=\|\tilde{y}\|=\|V\tilde{y}\|=\|y\|=1.$$
    Using \cref{lem:relations-between-matrices} \cref{lem-part:VA-muV-W}, we have
        $$\frac{1}{N}V^\t Wy = \frac{1}{N}{V^\t}(N VA_\mu V^\t)y=A_\mu V^\t y=A_\mu x.$$
    Since $y$ is a $\lambda$-eigenvector of $W$, we have $\frac{1}{N}V^\t Wy = \frac{\lambda}{N}V^\t y=\frac{\lambda}{N}x$. 
    Putting these two identities together, we get that $x$ is a unit $\frac{\lambda}{N}$-eigenvector of $A_\mu$. 

    To prove \cref{part-prop:lin-alg:X-eigen->Y-eigen}, assume that $x$ is a unit vector satisfying 
    $A_\mu x=\frac{\lambda}{N}x$. Since $V$ is an isometry, $y=Vx$ is a unit vector as well. Moreover, using \cref{lem:relations-between-matrices} \cref{lem-part:VA-muV-W} again, we have
    $$Wy=(N V A_\mu V^\t)(Vx)=N V A_\mu x=NV(\frac{\lambda}{N}x)=\lambda y.$$

To prove \cref{part-prop:lin-alg:multiplicity}, suppose $\lambda\neq 0$ is an eigenvalue of $W$, and $\{y_1,\ldots,y_t\}$ is an orthonormal basis for the associated $\lambda$-eigenspace. 
By \cref{part-prop:lin-alg:Y-eigen->X-eigen}, the set $\{V^\t y_1,\ldots,V^\t y_t\}$ is a subset of the $\frac{\lambda}{N}$-eigenspace of $A_\mu$. Moreover, since each $y_i$ belongs to the range of $V$ (by \cref{part-prop:lin-alg:Y-eigen->X-eigen}), and using the fact that the operator $V^\t$ when restricted to the subspace ${\rm range}(V)$ is an isometry, we observe that the set $\{V^\t y_1,\ldots,V^\t y_t\}$ is orthonormal as well. 
Thus the multiplicity of $\frac{\lambda}{N}$ as an eigenvalue of $A_\mu$ is at least $t$.
Conversely, let $\frac{\lambda}{N}$ be a nonzero eigenvalue of $A_\mu$ with  orthonormal eigenbasis $\{x_1,\ldots,x_s\}$.
By \cref{part-prop:lin-alg:X-eigen->Y-eigen}, the set $\{V x_1,\ldots,Vx_s\}$ is an orthogonal family of $\lambda$-eigenvectors of $W$. Therefore the multiplicity of $\frac{\lambda}{N}$ as an eigenvalue of $A_\mu$, is at most $t$.
\end{proof}
With this proposition in hand, we can reduce computing the eigen-decomposition of the range of the matrix $W$ to the eigen-decomposition of the smaller matrix $A_\mu$.
Note that $W$ is  $N\times N$ (and $N$ usually tends to infinity), whereas the size of $A_\mu$ is equal to the number of the blocks in SBM.
This correspondence extends to the nonzero eigenvalues/eigenvectors of the graphon operator $T_{w_{A,\mu}}$ as well, since
 $\lambda\neq 0$ is an eigenvalue of $T_{w_{A,\mu}}$ if and only if $\lambda N$ is an eigenvalue of $W$ (\cref{rem:eigen-matrix-graphon}). The following corollary then follows directly from \cref{prop:lin-alg-compression}.

\begin{corollary}
\label{cor:graphon_eigenvalues_SBM}
The nonzero eigenvalues of the graphon $w_{A,\mu}$  are exactly the nonzero eigenvalues of $A_\mu$.
\end{corollary}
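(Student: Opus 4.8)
The plan is to chain together two eigenvalue correspondences that have already been established, arranged so that the scaling factors of $N$ cancel. First I would fix a nonzero real number $\lambda$ and recall from \cref{rem:eigen-matrix-graphon} that, because $w_{A,\mu}$ is the graphon representation of the model matrix $W$ (as noted at the start of \cref{subsec:GraphonModelMatrix} under the assumption $\mu_i N\in\bbN$), the map $Y\mapsto f_Y$ gives a one-to-one correspondence between $\lambda N$-eigenvectors of $W$ and $\lambda$-eigenfunctions of $T_{w_{A,\mu}}$. In particular, $\lambda$ is a nonzero eigenvalue of $T_{w_{A,\mu}}$ if and only if $\lambda N$ is an eigenvalue of $W$.

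Next I would invoke \cref{prop:lin-alg-compression} \cref{part-prop:lin-alg:multiplicity}, applied with the number $\lambda N$ playing the role of the proposition's generic eigenvalue of $W$. This yields that $\lambda N$ is a nonzero eigenvalue of $W$ if and only if $\frac{\lambda N}{N}=\lambda$ is a nonzero eigenvalue of $A_\mu$, and moreover with matching multiplicity. Composing the two equivalences, the factors of $N$ cancel and I obtain directly: for $\lambda\neq 0$, $\lambda$ is an eigenvalue of $T_{w_{A,\mu}}$ if and only if $\lambda$ is a nonzero eigenvalue of $A_\mu$. Since the eigenvalues of the graphon $w_{A,\mu}$ are by definition the eigenvalues of its associated operator $T_{w_{A,\mu}}$, this is exactly the asserted set equality (and in fact the argument delivers the sharper statement that the multiplicities agree as well).

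There is essentially no hard step here; the proof is a direct composition of two bijections, which is why the text can assert that it follows immediately from \cref{prop:lin-alg-compression}. The only point requiring a moment's care is the bookkeeping of the $N$-scaling: the graphon-to-matrix passage scales eigenvalues up by $N$, so a graphon eigenvalue $\lambda$ corresponds to the $W$-eigenvalue $\lambda N$, whereas the compression from $W$ down to $A_\mu$ scales back by $1/N$. One must confirm that these two scalings are genuinely inverse to one another, so that their net effect on a nonzero eigenvalue is the identity; once this is checked, no further computation is needed.
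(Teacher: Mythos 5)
Your proposal is correct and follows exactly the paper's own route: the paper likewise combines the correspondence of \cref{rem:eigen-matrix-graphon} between the nonzero eigenvalues of $T_{w_{A,\mu}}$ and the eigenvalues $\lambda N$ of the model matrix $W$ with the $W$-to-$A_\mu$ correspondence of \cref{prop:lin-alg-compression}, so that the two $N$-scalings cancel. Your added observation that the multiplicities also agree is consistent with \cref{prop:lin-alg-compression} \cref{part-prop:lin-alg:multiplicity} and is a harmless strengthening of the stated corollary.
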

%

\subsection{Implementing the SBM-driven Fourier transform}
\label{Appendix-sec:implement}
Let $G$ be a graph sampled from $\SBM(A,\mu,N)$ and let $x$ be a signal on $G$.
In this section, we show exactly how to compute the SBM-driven Fourier transform of $x$; a summary of the steps is given in \cref{fig:algo}.

Let $A_\mu = U_0\Lambda U_0^*$ be a spectral decomposition of $A_\mu$, where $U_0$ is a unitary matrix and $\Lambda$ is an $n\times n$ diagonal matrix containing the eigenvalues of $A_\mu$. 
We first remove from $U_0$ all columns corresponding to eigenvalue zero; we denote the new matrix by $U_0$ again. 
The remaining columns of $U_0$ span $\ker(A_\mu)^\perp$, the orthogonal complement to the kernel of $A_\mu$.

\begin{figure}[ht]
\begin{center}
\fbox{
\begin{minipage}{0.73\textwidth}
\begin{enumerate}
        \item Compute $A_\mu=[\sqrt{\mu_i\mu_j}a_{i,j}]_{i,j}$.
    \item Find the spectral decomposition $A_\mu=U_0\Lambda U_0^*$.
    \item Remove all columns from $U_0$ that correspond to zero eigenvalues of $A_\mu$.
    \item Set $k_i=\mu_iN$. Compute $V=[v_{i,j}]_{N\times n}$ as
    $$v_{i,j}=\left\{\begin{array}{ll}
       \frac{1}{\sqrt{k_j}}  &  \mbox{ if }\  \sum_{s=1}^{j-1}k_s<i\leq \sum_{s=1}^{j}k_s\\
       0  & \mbox{ otherwise} 
    \end{array} \right.$$ 
    \item Compute $U=VU_0$.
\item The SBM-driven Fourier transform is defined as follows: 
\begin{enumerate}
    \item For each distinct eigenvalue $\lambda$, the Fourier projection is 
    $$\widehat{x}(\lambda)=\sum_{i:\ \lambda_i=\lambda}\langle  x, U_i\rangle U_i.$$
    \item  Compute $\widehat{x}(0)= x-\sum_{\lambda\not=0}\widehat{x}(\lambda). $
    \item The inverse Fourier transform is $x=\sum_\lambda \widehat{x}(\lambda)$.
\end{enumerate}
\end{enumerate}
\end{minipage}
}
\end{center}
\caption{Algorithm to compute the SBM-driven Fourier transform of a signal $x$ on a graph sampled from $\SBM(A,\mu,N)$.}
\label{fig:algo}
\vspace{-0.4cm}
\end{figure}

If all nonzero eigenvalues of $A_\mu$ are distinct then we can adopt a simplified approach. In this case, all eigenspaces corresponding to nonzero eigenvalues are 1-dimensional, so we can represent the Fourier coefficients as scalars.
Let $U=VU_0$, where $V$ is the isometry given in \cref{def:M-A-mu-D} \ref{def:V-isom}. By \cref{prop:lin-alg-compression},  the columns of $U$ are exactly the eigenvectors of $W$ corresponding to its nonzero eigenvalues, and they form an orthonormal basis for $\ker(W)^\perp$.  The SBM-driven Fourier transform of the  graph signal $x\in \bbC^N$ can then be computed as $\widehat x = U^*x$. In particular, the Fourier coefficient associated with {non-zero} eigenvalue $\lambda_i$ equals the inner product $\widehat{x}(\lambda_i)=\langle x,U_i\rangle$, where $U_i$ is the eigenvector of $W$ associated with eigenvalue $N\lambda_i$. The Fourier coefficient $\widehat{x}_0$ associated with eigenvalue 0 can be computed as $\widehat{x}_0=\|P_0(x)\|=\| x-\sum_i \langle x,U_i\rangle U_i\|$. The (partial) inverse transform can be obtained as $\widetilde{x}=U\widehat{x} $. If $x\in \ker(W)^\perp$, then $\widehat{x}_0=0$ and $\widetilde{x}=x$. Otherwise, $\widetilde{x}=x-P_0(x)$.

In the general case where $A_\mu$ has repeated eigenvalues,  we cannot use the individual inner products $\langle x,U_i\rangle$ to represent the Fourier coefficients since they do not have the required convergence properties (see \cref{sec:graphonFT}). 
In this case, we need to adopt the graphon Fourier transform described in \cref{def:graphonFT}. Adapted 
to graph signals in $\bbC^N$, the Fourier projection 
corresponding to nonzero eigenvalue $\lambda$ of $A_\mu$ is represented by the vector 
$$
\widehat{x} (\lambda)=\sum_{i:\lambda_i=\lambda} \langle x, U_i\rangle U_i.
$$
For eigenvalue zero, we have that $\widehat{x}(0)=x-\sum_{i:\lambda_i\neq 0}\widehat{x}(\lambda)$.
The inverse Fourier transform of $x$ is given as $x=\sum_\lambda \widehat{x}(\lambda)$, where the sum is over all eigenvalues of $A_\mu$, including zero. 
 


\section{Fourier basis for SBM with group symmetries}\label{sec:cayley-general}
In this section, we discuss stochastic block models whose structure is informed by certain group symmetries.
We formalize the group theoretic symmetries of an SBM through its probability matrix as follows.
\begin{definition}\label{def:CayleySBM-matrix}
An $n\times n$ matrix $A$ is called a \emph{Cayley matrix} on a group $\bbG$
if $|\bbG|=n$ and there exists a \emph{connection function} $f:\bbG\rightarrow [0,1]$ so that $a_{i,j}=f(g_i^{-1}g_j)$ for all $i,j\in [n]$. The function $f$ must be \emph{inverse-invariant}, i.e.\ $f(x)=f(x^{-1})$. An $\SBM(A,\mu,n)$ is called a \emph{$\bbG$-SBM}, if $A$ is a Cayley matrix on the group $\bbG$. 
\end{definition}

A symmetric matrix can be viewed as a weighted graph, where the edge between vertices $i$ and $j$ has weight $a_{ij}$. With this interpretation,  the definition of a Cayley matrix coincides with that of a Cayley graph.
This fact has inspired our choice of terminology in \cref{def:CayleySBM-matrix}.

\subsection{$\bbG$-SBM with uniform measure}\label{subsec:G-SBM-uniform}
Let $\SBM(A,\mu,N)$ be a $\bbG$-SBM with the connection function $f:\bbG\to [0,1]$. 
Here, we consider the special case where $\mu $ is the uniform measure on $[n]$. 
Under this condition,  $A_\mu =\frac{1}{n}A$, which is a Cayley matrix,
and the associated graphon $w_{A,\mu}$ becomes a \emph{Cayley graphon}. 
Cayley graphons are generalizations of Cayley graphs, and were first introduced in \cite{cayley-graphon}.
The Cayley graphons of finite groups are precisely the graphons associated with a $\bbG$-SBM with uniform measure.
For such SBM, the graphon Fourier basis can be derived directly from the Cayley matrix $A$.
Eigen-decompositions of Cayley matrices are well-understood when the group is Abelian (\cite{1979:Babai:Spectraofcayleygraphs}) or quasi-Abelian (\cite{Rockmore}). 
Representation theory of groups has been used to construct `preferred' Fourier bases  for signal processing on Cayley graphs and graphons (see e.g.~\cite{MDK:2019:Sampta,Ghanehari-Janssen-Kalyaniwalla,BGHP-2023,permutahedron}).

Using harmonic analysis, it is easy to find a graphon Fourier basis for a $\bbG$-SBM where $\bbG$ is Abelian.
Let $\bbG=\{g_1,\ldots,g_n\}$ be an Abelian group, and $\bbT$ be the multiplicative group of complex numbers with modulus 1 represented as $\bbT=\{e^{2\pi i x}:0\leq x<1\}$. 
Characters of $\bbG$ are maps $\chi:\bbG\to {\mathbb T}$ satisfying $\chi(gh)=\chi(g)\chi(h)$ for all $g,h\in {\mathbb G}$. 
We denote the collection of all characters of $\bbG$ by $\widehat{\bbG}$. The set of normalized characters $\left\{\frac{\chi}{\sqrt{n}}: \ \chi\in \widehat{\bbG}\right\}$ forms an orthonormal basis for the vector space $\bbC^n$. 
Here, we think of $\bbC^n$ as vectors indexed by elements of $\bbG$. This allows us to identify functions on $\bbG$ with vectors in $\bbC^n$.

The characters of a group give a  diagonalization of a Cayley matrix on $\bbG$ as follows. For a proof, see e.g., \cite{1979:Babai:Spectraofcayleygraphs,BGHP-2023}.
\begin{lemma}[Diagonalization of Cayley matrices]\label{lem:diagonalization-Cayley} 
Let $\mathbb{G}$ be an Abelian group,  and $f:\bbG\rightarrow [0,1]$ be a connection function. 
Let $A$ be the Cayley matrix on $\bbG$ defined by $f$. Let $\widehat{\bbG}=\{\chi_1,\ldots,\chi_n\}$ be the set of characters of $\bbG$. Then every character $\chi_i\in\widehat{\bbG}$ is an eigenvector of $A$ associated with eigenvalue $\lambda_i=\Sigma_{x\in \bbG}f(x)\overline{\chi_i(x)}$. 
Consequently, the unitary matrix $U$ defined as $U:=\left[\frac{\chi_j(g_i)}{\sqrt{n}}\right]_{i,j}$ diagonalizes $A$: 
$$
U^{*} A U =\Gamma:= {\rm diag}\left({\sum\limits_{x\in \bbG} f(x)\overline{\chi_1(x)}},\ldots,\sum\limits_{x\in \bbG}f(x)\overline{\chi_n(x)}\right).
$$
\end{lemma}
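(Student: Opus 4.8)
The plan is to verify directly that each character, viewed as a column vector indexed by the group elements, is an eigenvector of $A$, and then assemble these eigenvectors into the claimed diagonalizing matrix. Throughout I identify the character $\chi_i$ with the vector in $\bbC^n$ whose $k$-th entry is $\chi_i(g_k)$.

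First I would compute the action of $A$ on $\chi_i$ coordinate-wise. For each $k\in[n]$,
$$(A\chi_i)_k=\sum_{j=1}^n a_{k,j}\,\chi_i(g_j)=\sum_{j=1}^n f(g_k^{-1}g_j)\,\chi_i(g_j).$$
The key manipulation is the change of variable $x=g_k^{-1}g_j$: as $g_j$ ranges over $\bbG$ so does $x$, and $g_j=g_k x$, so multiplicativity of $\chi_i$ gives $\chi_i(g_j)=\chi_i(g_k)\chi_i(x)$. Factoring out $\chi_i(g_k)$ yields
$$(A\chi_i)_k=\chi_i(g_k)\sum_{x\in\bbG}f(x)\chi_i(x),$$
that is, $A\chi_i=\big(\sum_{x\in\bbG}f(x)\chi_i(x)\big)\chi_i$.

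The one point that genuinely requires the hypotheses — and the step I would flag as the crux — is reconciling the scalar just obtained with the eigenvalue $\lambda_i=\sum_{x\in\bbG}f(x)\overline{\chi_i(x)}$ stated in the lemma. Here I would invoke the inverse-invariance of the connection function, $f(x)=f(x^{-1})$, together with the identity $\chi_i(x^{-1})=\overline{\chi_i(x)}$, valid since $|\chi_i(x)|=1$. Substituting $y=x^{-1}$ in $\sum_x f(x)\chi_i(x)$ gives $\sum_y f(y^{-1})\chi_i(y^{-1})=\sum_y f(y)\overline{\chi_i(y)}$, so the two expressions coincide and $\chi_i$ is an eigenvector with eigenvalue exactly $\lambda_i$. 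Without inverse-invariance the eigenvalue would carry $\chi_i$ rather than $\overline{\chi_i}$; this is the only place the symmetry of $A$ enters, and it is the sole nonroutine observation in the argument.

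Finally I would assemble the diagonalization. By construction the $j$-th column of $U=\big[\chi_j(g_i)/\sqrt{n}\big]_{i,j}$ is the normalized character $\chi_j/\sqrt{n}$, and since $\{\chi_j/\sqrt{n}:\chi_j\in\widehat{\bbG}\}$ is an orthonormal basis of $\bbC^n$ (as recorded earlier in this section), $U$ is unitary, i.e.\ $U^*U=I$. The eigenvector relation gives $AU=U\Gamma$ with $\Gamma={\rm diag}(\lambda_1,\ldots,\lambda_n)$, since applying $A$ to the $j$-th column scales it by $\lambda_j$; multiplying on the left by $U^*$ yields $U^*AU=\Gamma$, completing the proof.
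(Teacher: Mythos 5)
Your proof is correct: the change of variables $x=g_k^{-1}g_j$ together with multiplicativity of $\chi_i$ shows each character is an eigenvector, inverse-invariance of $f$ and $\chi_i(x^{-1})=\overline{\chi_i(x)}$ identify the eigenvalue with $\sum_{x\in\bbG}f(x)\overline{\chi_i(x)}$, and orthonormality of the normalized characters makes $U$ unitary with $AU=U\Gamma$. The paper itself gives no proof, deferring to the cited references (Babai; Beck--Ghandehari--Hudson--Paltenstein), and your argument is precisely the standard one found there, so there is nothing to reconcile.
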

The characters of $\bbG$ thus provide an eigenbasis for a Cayley matrix on a group $\bbG$ no matter which connection function $f$ is used.

\begin{remark}\label{remark:real-eigen}
For $\chi\in \widehat{\bbG}$, let $\overline{\chi}$ denote the character defined by $\overline{\chi}(x)=\overline{\chi (x)}$  for all $x\in \bbG$. Since $\overline{\chi (x)}=\chi(x^{-1})$ and $f$ is inverse-invariant, we have that the eigenvalues associated with $\chi$ and $\overline{\chi}$ are equal. Therefore, $A$ has a repeated eigenvalue for each character $\chi$ so that $\chi\not=\overline{\chi}$. Moreover, $\chi+\overline{\chi} $ and {$i(\chi-\overline{\chi})$} provide a real-valued pair of eigenvectors that span the same space as $\chi$ and $\overline{\chi}$.
\end{remark}
\begin{theorem}[GFT for samples of $\bbG$-SBM with uniform measure]
\label{prop:GSP-Cayley-uniform}
Let $W$ be the model matrix of $\SBM(A,\mu,N)$, where $A$ is a Cayley matrix on an Abelian group $\bbG$ with the connection function $f$. Suppose $\mu$ is the uniform measure on $[n]$. 
Then  $\{\frac{N}{n}\sum\limits_{x\in \bbG} f(x)\overline{\chi(x)}\}_{\chi\in\widehat{\bbG}}$ contains all nonzero eigenvalues of $W$ with corresponding orthonormal eigenbasis $\{\frac{1}{\sqrt{N}}D \chi\}_{\chi\in\widehat{\bbG}}$,
where $D$ is given in \cref{def:M-A-mu-D}.
\end{theorem}
\begin{proof}
    Since $\mu$ is a uniform measure, we have $A_\mu=\frac{1}{n}A$, $M=\frac{1}{n}I$, and $V=\sqrt{\frac{n}{N}}D$. By \cref{lem:diagonalization-Cayley},  the collection $\{\frac{1}{n}\sum\limits_{x\in \bbG} f(x)\overline{\chi(x)}\}_{\chi\in\widehat{\bbG}}$ includes all eigenvalues of $A_\mu$ with corresponding orthonormal eigenbasis $\{ \frac{1}{\sqrt{n}}\chi\}_{\chi\in\widehat{\bbG}}$. The statement then follows directly from the relation between the eigen-decompositions of $A_\mu$ and $W$ given in \cref{prop:lin-alg-compression}.
\end{proof}

In this paper, we restrict our attention to Cayley graphons on Abelian groups. 
If the underlying group is not Abelian, the representation theory of the group may be used to develop a specific basis for the graphon Fourier transform. 
This basis can then be used to provide an SBM-driven Fourier transform for graphs sampled from the SBM.
Details can be found in \cite{Ghanehari-Janssen-Kalyaniwalla}; for applications on the symmetric group, see \cite{BG-EUSIPCO,permutahedron}.
\subsection{General robustness results under block size perturbation}
\label{subsec:perturbations}
In this section, we explore how the SBM-driven Fourier transform
defined on samples of an SBM varies when the block ratios in the SBM are adjusted slightly. 
We remark that the results in this subsection are valid for general SBMs. We apply these general results to the Cayley setting in the subsequent sections. 

As shown in \cref{subsec:computing-basis},
the eigen-decomposition of the $n\times n$ weighted probability matrix $A_\mu$ leads to a basis for the graphon Fourier transform for the SBM defined by $A$ and $\mu$. 
This point of view allows us to use tools from matrix perturbation theory to control the effect of small variations in block sizes on the resulting graphon Fourier transform.

Let $A$ be a probability matrix of size $n$. Fix $N\in {\mathbb N}$ and $\epsilon >0$. Let $\mu,\mu'$ be probability measures on $[n]$ so that  $\mu'_i=\mu_i(1+\epsilon_i)$ and $|\epsilon_i|\leq \epsilon$ for all $i\in [n]$. Let $W,W'$ be the model matrices of $\SBM(A,\mu,N)$ and $\SBM(A,\mu',N)$, respectively. Let
$\{y_i\}_{i\in I_{A,\mu}}$ and $\{y_i'\}_{i\in I_{A,\mu'}}$  be orthonormal sets of eigenvectors associated with nonzero eigenvalues of $W$ and $W'$ ordered as in \eqref{eq:ordering}, respectively. 

The following theorem presents an upper bound on the change to $\widehat{x}$ if we use the basis  $\{y'_i\}$ derived from the slightly different measure $\mu'$, instead of the true basis  $\{y_i\}$. First, for any eigenvalue $\lambda$ of $A_\mu$ we define $\gamma(\lambda)$ as the gap between $\lambda $ and the  other eigenvalues, i.e.,
$$
\gamma (\lambda)=\min\{ |\lambda -\lambda_i|:\lambda_i\not=\lambda\}.
$$

\begin{theorem}
\label{prop:perturbed_blocksizes}
Consider any  nonzero eigenvalue $\lambda$  of $A_\mu$, with multiplicity $d$.
Then for any signal $x\in \bbC^N$,
\begin{equation}\label{eq:4-1}
    \| \widehat{x}(\lambda)-\sum_{i\in I_\lambda} \langle x,y'_i\rangle_{{\bbC}^N} y'_i \|_{\bbC^N} 
    \leq  \left(\frac{2^{5/2}d^{1/2} \|A_\mu\|_{\opr}}{\gamma(\lambda)}
    n\epsilon+\frac{2\sqrt{3}}{\sqrt{\mu_{\min}}}\sqrt{n\epsilon}\right) \| x\|_{\bbC^N},
\end{equation}
where $\mu_{\min}$ denotes the minimum value attained by the measure on $[n]$, and $d$ is the multiplicity of $\lambda$.
\end{theorem}
The proof of \cref{prop:perturbed_blocksizes} depends heavily on the Davis--Kahan theorem on matrix perturbations. The relevant background can be found in \cref{sec:perturb}. First, we prove a necessary lemma that shows how the isometry $V$ is affected by the change in measure.

\begin{lemma}\label{lem-V-perturbation}
%
Let $V:\bbC^n\rightarrow \bbC^N$ (resp.~$V':\bbC^n\rightarrow \bbC^N$) be the isometry given in \cref{def:M-A-mu-D} for $\mu$ (resp.~${\mu'}$). Then, letting $\mu_{\min}:=\min\{\mu_1,\ldots,\mu_n\}$, we have
$$\|V-V'\|_\opr\leq \sqrt{\frac{3n}{\mu_{\min}}}\sqrt{\epsilon}.$$
\end{lemma}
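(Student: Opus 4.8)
The plan is to control $\|V-V'\|_\opr$ by decoupling the two distinct sources of discrepancy between $V=\frac{1}{\sqrt N}DM^{-1/2}$ and $V'=\frac{1}{\sqrt N}D'M'^{-1/2}$: the change in the block–membership matrix $D$ (caused by the shifting block boundaries) and the change in the diagonal rescaling $M^{-1/2}$. Writing
$$V-V'=\frac{1}{\sqrt N}\Big((D-D')M^{-1/2}+D'\,(M^{-1/2}-M'^{-1/2})\Big),$$
the triangle inequality and submultiplicativity of the operator norm reduce the problem to estimating four factors: $\|D-D'\|_\opr$, $\|M^{-1/2}\|_\opr$, $\|D'\|_\opr$, and $\|M^{-1/2}-M'^{-1/2}\|_\opr$. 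Three of these are immediate, since the $M$'s are diagonal: $\|M^{-1/2}\|_\opr=\mu_{\min}^{-1/2}$ and $\|M^{-1/2}-M'^{-1/2}\|_\opr=\max_i \mu_i^{-1/2}\,|1-(1+\epsilon_i)^{-1/2}|=O(\epsilon)\,\mu_{\min}^{-1/2}$, while \cref{lem:relations-between-matrices}\cref{lem-part:D-norm} applied to $\mu'$ gives $\|D'\|_\opr=\sqrt{N\max_i\mu'_i}\le\sqrt N$.

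The crux is the estimate of $\|D-D'\|_\opr$, which I would bound by the Frobenius (Hilbert--Schmidt) norm. Order the blocks consecutively, so that under $\mu$ (resp.\ $\mu'$) block $i$ occupies the rows indexed by the interval $S_i$ (resp.\ $S'_i$) whose right endpoint is the partial sum $K_i=N\sum_{j\le i}\mu_j$ (resp.\ $K'_i=N\sum_{j\le i}\mu'_j$). The $i$-th columns of $D$ and $D'$ are the indicator vectors of $S_i$ and $S'_i$, so $\|(D-D')e_i\|^2=|S_i\triangle S'_i|$. Because $S_i$ and $S'_i$ are intervals differing only through a shift of each endpoint, the triangle inequality for the counting ($L^1$) norm yields $|S_i\triangle S'_i|\le|\delta_{i-1}|+|\delta_i|$, where $\delta_i:=K'_i-K_i=N\sum_{j\le i}\mu_j\epsilon_j$ is the cumulative offset of the $i$-th boundary. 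Each offset obeys $|\delta_i|\le N\epsilon\sum_{j\le i}\mu_j\le N\epsilon$, and since $\mu,\mu'$ are both probability measures we have the telescoping boundary conditions $\delta_0=\delta_n=0$. Summing the column contributions then gives
$$\|D-D'\|_\opr^2\le\|D-D'\|_F^2=\sum_{i=1}^n|S_i\triangle S'_i|\le 2\sum_{i=1}^{n-1}|\delta_i|\le 2nN\epsilon.$$

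Putting the pieces together, the first term contributes $\frac{1}{\sqrt N}\|D-D'\|_\opr\|M^{-1/2}\|_\opr\le\sqrt{2n\epsilon/\mu_{\min}}$, which is exactly the leading $\sqrt\epsilon$-order term; the second term is of order $\epsilon/\sqrt{\mu_{\min}}$ and is therefore dominated, so that for $\epsilon$ in the relevant (small) range the whole expression is at most $\sqrt{3n/\mu_{\min}}\,\sqrt\epsilon$, the slack $\sqrt3-\sqrt2$ absorbing the lower-order contribution. I expect the main obstacle to be precisely the bound on $\|D-D'\|_\opr$: the block boundaries move \emph{cumulatively}, so the two membership matrices can disagree far from where any single block changed size, and the observation that makes the estimate clean is that all of this discrepancy is governed by the partial sums $\delta_i$, each of which remains controlled by $N\epsilon$ thanks to $\sum_j\mu_j=1$.
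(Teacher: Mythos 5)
Your proof is correct and follows essentially the same route as the paper's: the identical decomposition $V-V'=\frac{1}{\sqrt N}\bigl((D-D')M^{-1/2}+D'(M'^{-1/2}-M^{-1/2})\bigr)$, the same diagonal estimates $\|M^{-1/2}\|_\opr=\mu_{\min}^{-1/2}$ and $\|M^{-1/2}-M'^{-1/2}\|_\opr=O(\epsilon)\mu_{\min}^{-1/2}$, and the same key step of bounding $\|D-D'\|_\opr$ by the Frobenius norm via the cumulative boundary offsets (your $\delta_i$), yielding the same bound $\|D-D'\|_F^2\le 2nN\epsilon$. Your bookkeeping of the column contributions (summing $|\delta_{i-1}|+|\delta_i|$ with $\delta_0=\delta_n=0$) is marginally tidier than the paper's, but the argument and the final absorption of the $O(\epsilon)$ term into the $\sqrt3$ constant are the same.
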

\begin{proof}
{Let $D,M$ (resp.~$D',M'$) be the matrices associated with $\mu$ (resp.~$\mu'$) as in \cref{def:M-A-mu-D}. 
Recall that $k_i=\mu_i N$ and $k_i'=\mu'_iN$.
Using the structure of $D$ and $D'$, it is easy to observe that the $j$'th column of $D-D'$ has nonzero elements in at most 
$|\sum_{i=1}^{j-1} k_i-\sum_{i=1}^{j-1} k'_i|+|\sum_{i=1}^{j} k_i-\sum_{i=1}^{j} k'_i|$ entries. Given that the matrix $D-D'$ has $n$ columns, and its nonzero entries are either $1$ or $-1$, we have 
\begin{eqnarray}
\|D-D'\|_\opr&\leq& \|D-D'\|_F\leq \sqrt{n\left(\left|\sum_{i=1}^{j-1} k_i-\sum_{i=1}^{j-1} k'_i\right|+\left|\sum_{i=1}^{j} k_i-\sum_{i=1}^{j} k'_i\right|\right)}\nonumber\\
&\leq& \sqrt{2n\sum_{i=1}^{n} |k_i-k'_i|}=\sqrt{2n\sum_{i=1}^{n} |\epsilon_i|\mu_i N}\leq  \sqrt{2n\epsilon N\sum_{i=1}^{n} \mu_i } = \sqrt{2 \epsilon nN}, \label{eq-D-D'-stuff}
\end{eqnarray}
where in the last equality we used the fact that  $\sum_{i=1}^n\mu_i= 1$.
Next, note that $\|M^{-\frac{1}{2}}\|_\opr=\frac{1}{\sqrt{\mu_{\min}}}$ and 
\begin{eqnarray*}
\|M^{-\frac{1}{2}}-M'^{-\frac{1}{2}}\|_\opr&=&\max\left\{ \left|\frac{1}{\sqrt{\mu_i}}-\frac{1}{\sqrt{\mu_i(1+\epsilon_i)}}\right|:i\in[n]\right\}\\
&=&\max\left\{\frac{1}{\sqrt{\mu_i}}\left|\frac{1-\sqrt{1+\epsilon_i}}{\sqrt{1+\epsilon_i}}\right|: i\in [n]\right\}
\leq \frac{\epsilon}{2\sqrt{\mu_{\min}}}.
    \end{eqnarray*}
These facts, together with \eqref{eq-D-D'-stuff} and \cref{lem:relations-between-matrices} \cref{lem-part:D-norm}, lead to the desired estimates:
\begin{eqnarray*}
    \|V-V'\|_\opr&=&\frac{1}{\sqrt{N}}\|DM^{-\frac{1}{2}}-D'M'^{-\frac{1}{2}}\|_\opr\\
    &\leq& \frac{1}{\sqrt{N}}\left(\|D-D'\|_\opr\|M^{-\frac{1}{2}}\|_\opr+\|D'\|_\opr\|M^{-\frac{1}{2}}-M'^{-\frac{1}{2}}\|_\opr\right)\\
    &\leq& \frac{1}{\sqrt{N}}\left(\sqrt{2\epsilon nN}\frac{1}{\sqrt{\mu_{\min}}}+\sqrt{N} \frac{\epsilon}{2\sqrt{\mu_{\min}}}
    \right)\leq \frac{\sqrt{3n}}{\sqrt{\mu_{\min}}}\sqrt{\epsilon}.
\end{eqnarray*}   
} 
\end{proof}

\begin{proof}[Proof of \cref{prop:perturbed_blocksizes}]
Since $\mu'$ is a probability measure on $[n]$, we have  that $\sum_{i=1}^n \epsilon_i=0$. 
Note that
$$A_{\mu'}-A_{\mu}=\left[\sqrt{\mu_i}\sqrt{\mu_j}{a_{ij}}(\sqrt{1+\epsilon_i}\sqrt{1+\epsilon_j}-1)\right]_{i,j}=A_\mu\circ E,$$ 
where
$E=\left[\sqrt{1+\epsilon_i}\sqrt{1+\epsilon_j}-1\right]_{i,j}$ and $\circ$ denotes the Schur (or Hadamard) product of matrices. Let $z, {\mathbf 1}\in \bbC^n$ be defined as 
$z=[\sqrt{1+\epsilon_1},\ldots, \sqrt{1+\epsilon_n}]^\t$, ${\mathbf 1}=[1,\ldots, 1]^\t$, and note that 
$E=(z-\mathbf{1})z^\t+\mathbf{1}(z^\t-\mathbf{1}^\t)$.
Moreover, observe that $\|z\|_{\bbC^n}\leq \sqrt{n(1+\epsilon)}$ and $\|z-\mathbf{1}\|_{\bbC^n}\leq \sqrt{n}(\sqrt{1+\epsilon}-1)\leq\sqrt{n}(\frac{\epsilon}{2})$.
Since the operator norm is submultiplicative with respect to the Schur product of matrices, using the above norm estimates we get 
\begin{eqnarray*}
    \|A_{\mu'}-A_{\mu}\|_\opr
    &\leq& \|A_{\mu}\|_\opr\|E\|_\opr\\ &\leq & \|A_{\mu}\|_\opr(\|z-\mathbf{1}\|_{\bbC^n}\|z\|_{\bbC^n}+\|\mathbf{1}\|_{\bbC^n}\|z-\mathbf{1}\|_{\bbC^n})\\
    &\leq& n(\frac{\epsilon}{2})(\sqrt{1+\epsilon}+1)\|A_{\mu}\|_\opr.
\end{eqnarray*}
Since $\epsilon \leq 1$, we have the upper bound $\|A_{\mu'}-A_{\mu}\|_\opr\leq 2\epsilon n\|A_{\mu}\|_\opr$.

Let $V:\bbC^n\rightarrow \bbC^N$ (resp.~$V':\bbC^n\rightarrow \bbC^N$) be the isometry given in \cref{def:M-A-mu-D} for $\mu$ (resp.~${\mu'}$), and 
set $x_i=V^\t y_i$ and $x'_i={V'}^\t y'_i$. 
By \cref{prop:lin-alg-compression},   $\{x_i\}_{i\in I_{A,\mu}}$ (resp.~$\{ x'_i\}_{i\in I_{A,\mu'}}$) is an orthonormal set of eigenvectors of $A_\mu$ (resp.~$A_{\mu'}$) associated with nonzero eigenvalues, indexed as in \eqref{eq:ordering}.
Since all matrices $A_\mu$, $W$, $A_{\mu'}$ and $W'$ are symmetric and real-valued, their corresponding eigenbases can be chosen to be real-valued vectors. 
Fix a nonzero eigenvalue $\lambda $ of $A_\mu$ of multiplicity $d$ so that $\lambda=\lambda_r=\dots =\lambda_s$, where $d=s-r+1$. Let  $\ee ={\rm span}\{ x_r,\dots ,x_s\}$ and  $\ee' ={\rm span}\{ x'_r,\dots ,x'_s\}$. Let $P_\ee$ and $P_{\ee'}$  be the orthogonal projections, in $\bbR^n$, on $\ee$ and $\ee'$, respectively.  
Then, by \cref{DavisKahanProjections}, we have that
\begin{equation}\label{eqn:Pee}
   \| P_\ee-P_{{\ee}'}\|_{\opr}\leq \| P_\ee-P_{{\ee}'}\|_{F}\leq \frac{2^{3/2}d^{1/2}\|A_\mu-A_{\mu'}\|_{\opr}}{\min\{ \lambda_{r-1}-\lambda,\lambda-\lambda_{s+1}\}}
   \leq  \frac{2^{5/2}d^{1/2}\epsilon n\|A_\mu\|_\opr}{\gamma(\lambda)}.
 \end{equation}
Note that the projection $P_\ee:\bbC^n\rightarrow\bbC^n$ is defined as 
$$P_\ee(z)=\sum_{i=r}^s \langle z,x_i\rangle x_i =\sum_{i=r}^s (x_i^*z)x_i=\sum_{i=r}^s x_i(x_i^*z)=(\sum_{i=r}^s x_ix_i^*)z.$$
So, when represented in matrix form, we have that $P_\ee=\sum_{i=r}^s x_ix_i^*$; a similar formula holds for $P_{\ee'}$. 
Let $P_E$ and $P_{E'}$ be the orthogonal projections on $E={\rm span}\{ y_r,\dots , y_s\}$ and $E'={\rm span}\{ y'_r,\dots , y'_s\}$ respectively. By \cref{prop:lin-alg-compression} \ref{part-prop:lin-alg:Y-eigen->X-eigen}, each $y_i$ (resp.~$y_i'$) is in the image of $V$ (resp.~$V'$), so $Vx_i=VV^\t y_i=y_i$ and similarly $V'x'_i=y'_i$.
Expressing $P_E$ in matrix form, we have that 
$$P_{E}=\sum_{i=r}^s (Vx_i)(Vx_i)^*
=\sum_{i=r}^s Vx_i\, x_i^*V^*
=V\left(\sum_{i=r}^s x_i\, x_i^*\right)V^*=VP_\ee V^*.
$$
Applying the same argument, we obtain that $P_{E'}=V'P_{\ee '}{V'}^*$.
So 
\begin{equation*}
P_E-P_{E'}=VP_\ee V^*-V'P_{\ee '}{V'}^*
=V(P_\ee -P_{\ee '})V^*+(V-V')P_{\ee '}V^*+V'P_{\ee '}(V^*-{V'}^*),
\end{equation*}
which implies that 
\begin{equation*}
\|P_E-P_{E'}\|_{\opr}
\leq
\|P_\ee -P_{\ee '}\|_{\opr}+\|V-V'\|_{\opr}\|P_{\ee '}\|_{\opr}+\|P_{\ee '}\|_{\opr}\|V^*-{V'}^*\|_{\opr},
\end{equation*}
where we used the facts that operator norm is submultiplicative, and $\| V\|_{\opr}=\| V^*\|_{\opr}=1$, as $V$ is an isometry.
Furthermore, since any projection is contractive, we get 
\begin{equation}\label{eq:to-bound}
\|P_E-P_{E'}\|_{\opr}
\leq
\|P_\ee -P_{\ee '}\|_{\opr}+2\|V-V'\|_{\opr}.
\end{equation}
Combining \eqref{eq:to-bound}, \eqref{eqn:Pee} and \cref{lem-V-perturbation}, we obtain that 
$$\| P_E-P_{E'}\|_\opr\leq \frac{2^{5/2}d^{1/2}\epsilon n\|A_\mu\|_{\opr}}{\gamma(\lambda)}
+2\sqrt{\frac{3n\epsilon}{\mu_{\min}}},$$
which finishes the proof.
\end{proof}

\begin{remark}[Sharpness of the upper bound in \Cref{prop:perturbed_blocksizes}]
The values $n,\gamma(\lambda), \mu_{\min}, d$ and $\|A_{\mu}\|_\opr$ are determined by $A$ and $\mu$. 
So, these quantities can be regarded as constants, since $A$ and $\mu$ are fixed. 
    The constants $n,\gamma(\lambda), d$ and $\|A_\mu\|_\opr$, together with the error term $\epsilon$, appear in \eqref{eq:4-1} as a consequence of applying Davis--Kahan theorem, which is a standard and efficient tool for bounding changes in 
    eigenspaces of a symmetric matrix under additive perturbations.
    To show that the bound in \eqref{eq:4-1} cannot be improved, we observe that the term involving $\sqrt{\epsilon}$ cannot be replaced with a constant multiple of $\epsilon.$
    Indeed, it is not difficult to see that $\|V-V'\|_\opr\gtrsim\sqrt{\epsilon}$. 
    Finally, note that $N$ can grow large, whereas $n$, which denotes the number of blocks in the SBM, remains fixed. 
    Therefore, it is important that the right hand side of \eqref{eq:4-1} does not depend on $N$.
\end{remark}

\cref{prop:perturbed_blocksizes} shows that the difference in the Fourier transform is small if $\epsilon$, the bound on the relative perturbation of the block sizes, is much smaller  than both $\frac{\mu_{\min}}{n}$ and $\frac{\gamma(\lambda)}{nd^{1/2}\|A_\mu\|_{\opr}}$. 
In particular, in the setting of a $\bbG$-SBM with almost uniform block sizes, \cref{prop:perturbed_blocksizes} gives the conditions under which the well-structured group Fourier basis obtained in \Cref{prop:GSP-Cayley-uniform} can be used as a good approximation for the $\bbG$-SBM Fourier basis. 
In the next section, we study $\bbG$-SBM with block sizes that are not close to uniform.
\subsection{$\bbG$-SBM with general measure}\label{subsec:Cayley-general}
For samples of a $\bbG$-SBM with non-uniform measure, the character basis  described in \cref{subsec:G-SBM-uniform} is not necessarily a good choice.
Namely, this basis does not provide convergence of the graph Fourier transform as described in \cref{thm:convergence}. 
In this section we discuss how the group symmetries of the Cayley matrix help to partially define a graphon Fourier basis
(\cref{thm: general-decomp}).
First, we introduce the necessary notation.
\begin{notation}\label{notation:index-identofoer}
We define the index identifier function $\iota$ as follows.
 Fix the ordering $\widehat{\bbG}=\{\chi_1,\ldots,\chi_n\}$, and assume that $\chi_1$ is the trivial representation (i.e., $\chi_1\equiv 1$).   
For every $\chi\in\widehat{\bbG}$, let $\iota(\chi)$ denote the index $i$ such that $\chi=\chi_i$.
\end{notation}

The next theorem reduces the problem of computing an eigenbasis for the model matrix $W$ to the same problem for the smaller $n\times n$ matrix 
$\widetilde{M}\Gamma$. Both $\widetilde{M}$ and $\Gamma$ are  derived from the characters of $A$ and are easy to compute. This reduction may give us insight into properties of the eigenvectors of $W$ that might not be obvious from $A_\mu$ (e.g., see \cref{prop:one-large-block}).

\begin{theorem}[GFT for samples of $\bbG$-SBM with general measure]
\label{thm: general-decomp}
Let $W$ be the model matrix of $\SBM(A,\mu,N)$, where $A$ is a Cayley matrix on an Abelian group $\bbG$.
Define 
$$\widetilde{M}=\left[\frac{1}{n}\sum_{j=1}^n \mu_j(\chi_k^{-1}\chi_\ell)(g_j)\right]_{k,\ell\in[n]}.$$
Let $U,\Gamma$ be so that $A=U\Gamma U^*$ is the diagonalization of $A$ given in \cref{lem:diagonalization-Cayley}, 
and let $D$ be as in \cref{def:M-A-mu-D}. 
For a unit vector $y\in\bbC^N$ and $\lambda\neq 0$, we have
 \begin{equation*}
    y \ \mbox{ is a }\   \lambda\mbox{-eigenvector of } W \iff \ \frac{1}{\sqrt{N}}(U^*D^\t)y \ \mbox{ is a } \frac{\lambda}{N}\mbox{-eigenvector of } \ \widetilde{M}\Gamma.
\end{equation*}
Moreover,  $\widetilde{M}$ satisfies certain symmetries, in the sense that  
$\widetilde{M}_{k,\ell}=\widetilde{M}_{1,\iota(\chi_k^{-1}\chi_\ell)}$ 
for $\chi_k,\chi_\ell\in\widehat{\bbG}$.
\end{theorem}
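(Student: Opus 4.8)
The plan is to factor the claimed correspondence through two intermediate bijections that are already available: the correspondence between nonzero eigenvectors of $W$ and of $A_\mu$ supplied by \cref{prop:lin-alg-compression}, and a similarity between $A_\mu$ and $\widetilde{M}\Gamma$ coming from the character diagonalization of \cref{lem:diagonalization-Cayley}. Composing these two should produce exactly the map $Y \mapsto \frac{1}{\sqrt{N}}(U^*D^\t)Y$.

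First I would identify $\widetilde{M}$ algebraically. Writing $U = [\chi_\ell(g_i)/\sqrt{n}]_{i,\ell}$ as in \cref{lem:diagonalization-Cayley} and using that $M = {\rm diag}(\mu_1,\dots,\mu_n)$ is diagonal, a direct entrywise computation gives $(U^* M U)_{k,\ell} = \frac{1}{n}\sum_{j=1}^n \mu_j \overline{\chi_k(g_j)}\chi_\ell(g_j)$; since $\overline{\chi_k(g_j)} = \chi_k^{-1}(g_j)$, this is precisely $\widetilde{M}_{k,\ell}$, so $\widetilde{M} = U^* M U$. Then, using $A = U\Gamma U^*$ and $U^*U = I$,
\[
\widetilde{M}\Gamma = U^* M U\,\Gamma = U^*(MA)U.
\]
On the other hand, since $A_\mu = \sqrt{M}A\sqrt{M}$ we have $MA = \sqrt{M}\,A_\mu\,\sqrt{M}^{-1}$, so $\widetilde{M}\Gamma = (U^*\sqrt{M})\,A_\mu\,(U^*\sqrt{M})^{-1}$. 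Thus $\widetilde{M}\Gamma$ and $A_\mu$ are similar via the invertible matrix $S := U^*\sqrt{M}$, and $X \mapsto SX$ is a bijection carrying $\frac{\lambda}{N}$-eigenvectors of $A_\mu$ to $\frac{\lambda}{N}$-eigenvectors of $\widetilde{M}\Gamma$. Composing with \cref{prop:lin-alg-compression}, which for $\lambda\neq 0$ makes $Y \mapsto X = V^\t Y$ a bijection from the $\lambda$-eigenspace of $W$ onto the $\frac{\lambda}{N}$-eigenspace of $A_\mu$, and using $V^\t = \frac{1}{\sqrt{N}}M^{-1/2}D^\t$ together with $\sqrt{M}\,M^{-1/2} = I$, the composite simplifies to $S V^\t = \frac{1}{\sqrt{N}}U^*D^\t$. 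Hence $Z := \frac{1}{\sqrt{N}}(U^*D^\t)Y = S(V^\t Y)$, and the forward implication is immediate: if $Y$ is a $\lambda$-eigenvector of $W$, then $V^\t Y$ is a $\frac{\lambda}{N}$-eigenvector of $A_\mu$, so $Z$ is a $\frac{\lambda}{N}$-eigenvector of $\widetilde{M}\Gamma$.

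The delicate point, and the step I expect to require the most care, is the reverse implication, because $\widetilde{M}\Gamma$ is in general not self-adjoint and because the map $Y \mapsto Z$ collapses the large kernel $\ker V^\t = {\rm range}(V)^\perp$. Given that $Z$ is a $\frac{\lambda}{N}$-eigenvector of $\widetilde{M}\Gamma$, invertibility of $S$ yields that $X = S^{-1}Z = V^\t Y$ is a nonzero $\frac{\lambda}{N}$-eigenvector of $A_\mu$, and \cref{prop:lin-alg-compression} gives that $VX = VV^\t Y$ is a $\lambda$-eigenvector of $W$. To conclude that $Y$ itself is a $\lambda$-eigenvector one invokes that every nonzero eigenvector of $W$ lies in ${\rm range}(V)$ (\cref{prop:lin-alg-compression}): since $W = NVA_\mu V^\t$ annihilates ${\rm range}(V)^\perp$, decomposing $Y = VV^\t Y + Y^\perp$ with $Y^\perp \in {\rm range}(V)^\perp$ forces $\lambda Y^\perp = 0$, hence $Y^\perp = 0$, so $Y = VV^\t Y = VX$ is indeed the asserted eigenvector.

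Finally, the symmetry $\widetilde{M}_{k,\ell} = \widetilde{M}_{1,\iota(\chi_k^{-1}\chi_\ell)}$ is a direct consequence of the group structure on $\widehat{\bbG}$. Since $\bbG$ is Abelian, $\chi_k^{-1}\chi_\ell$ is again a character, say $\chi_m$ with $m = \iota(\chi_k^{-1}\chi_\ell)$; and since $\chi_1 \equiv 1$ is trivial, $\chi_1^{-1}\chi_m = \chi_m = \chi_k^{-1}\chi_\ell$. Substituting into the defining formula for $\widetilde{M}$ gives $\widetilde{M}_{1,m} = \frac{1}{n}\sum_{j=1}^n \mu_j(\chi_k^{-1}\chi_\ell)(g_j) = \widetilde{M}_{k,\ell}$, as claimed.
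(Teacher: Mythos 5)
Your algebraic skeleton is sound and in substance matches the paper: the identification $\widetilde{M}=U^*MU$, the similarity $\widetilde{M}\Gamma=(U^*\sqrt{M})\,A_\mu\,(U^*\sqrt{M})^{-1}$, the simplification $SV^\t=\frac{1}{\sqrt{N}}U^*D^\t$ for $S=U^*\sqrt{M}$, the forward implication, and the symmetry $\widetilde{M}_{k,\ell}=\widetilde{M}_{1,\iota(\chi_k^{-1}\chi_\ell)}$ are all correct. (The paper obtains the same reduction by substituting $A=U\Gamma U^*$ into $A_\mu X=\frac{\lambda}{N}X$ and changing variables $Z=U^*M^{1/2}X$; your explicit similarity via $S$ is a clean repackaging of that computation, and your symmetry argument is a more direct version of the paper's.)

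The reverse implication, however, contains a genuine gap, located exactly at the step you flagged as delicate. From $W=NVA_\mu V^\t$ and the decomposition $Y=VV^\t Y+Y^\perp$ you only obtain $WY=\lambda\, VV^\t Y$; the assertion that this ``forces $\lambda Y^\perp=0$'' is circular, since $WY=\lambda Y$ --- the very thing to be proven --- is equivalent to $\lambda Y^\perp=0$. Nothing in the hypotheses rules out $Y^\perp\neq 0$: if $VX$ is a unit $\lambda$-eigenvector of $W$ and $0\neq Y^\perp\in{\rm range}(V)^\perp$, then the unit vector $Y=(VX+Y^\perp)/\|VX+Y^\perp\|$ yields the same $Z=\frac{1}{\sqrt{N}}(U^*D^\t)Y$ up to a nonzero scalar (because $D^\t$ annihilates ${\rm range}(D)^\perp={\rm range}(V)^\perp$), so $Z$ is still a $\frac{\lambda}{N}$-eigenvector of $\widetilde{M}\Gamma$, while $Y$ is not an eigenvector of $W$. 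Thus the reverse direction as literally stated cannot be proved; the missing hypothesis is $Y\in{\rm range}(D)$. This is precisely how the paper's own proof proceeds: its chain of equivalences carries the side condition explicitly, beginning with $Y=\frac{1}{\sqrt{N}}(DM^{-1}U)Z \iff Z=\frac{1}{\sqrt{N}}(U^*D^\t)Y$ \emph{and} $Y\in{\rm range}(D)$, so what is actually established is the biconditional with ``$Y\in{\rm range}(D)$'' adjoined to the right-hand side (the theorem's wording suppresses this condition). If you add that hypothesis, your own decomposition argument closes immediately, since then $Y^\perp=0$ by assumption and $Y=VX$; without it, no argument can close the gap.
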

\begin{proof}
Let $y\in\bbC^N$, $z\in\bbC^n$, and $\lambda\neq 0$. With notation as in \cref{def:M-A-mu-D}, we have
\begin{equation}\label{part-thm: general-decomp-Y-Z-rel} 
y = \frac{1}{\sqrt{N}}(DM^{-1}U)z \iff z= \frac{1}{\sqrt{N}}(U^*D^\t)y \ \mbox{ and }\ y\in{\rm range}(D).
\end{equation}
In fact, \cref{part-thm: general-decomp-Y-Z-rel} can be easily verified using the equation $\frac{1}{N}D^\t D =M$ (\cref{lem:relations-between-matrices} \cref{part-lem:W=DADt-and DtD=NM}). 
Moreover, if $y,z$ satisfy  the relation described in \eqref{part-thm: general-decomp-Y-Z-rel}, then $y$ is nonzero precisely when $z$ is nonzero. This follows from the fact that $DM^{-1}U$ and $U^*D^\t D=N(U^*M)$ have null kernels. 

Next note that putting \cref{part-prop:lin-alg:Y-eigen->X-eigen} and \cref{part-prop:lin-alg:X-eigen->Y-eigen} of \cref{prop:lin-alg-compression} together, for $\lambda \neq 0$, we have the following equivalence:
        $$Wy = \lambda y \quad \mbox{ \emph{iff} } \quad \exists x\in\bbC^n,\ y= Vx \text{ and }A_{\mu}x = \frac{\lambda}{N}x.$$     
Substituting $A=U\Gamma U^*$ in the expression $A_\mu=M^{\frac{1}{2}}AM^{\frac{1}{2}}$, the equality
$A_{\mu}x = \frac{\lambda}{N}x$ can be equivalently written as 
$\Gamma U^{*}M^{\frac{1}{2}}x=\frac{\lambda}{N}U^*M^{-\frac{1}{2}}x.$ Letting $z=U^*M^{\frac{1}{2}}x$ (equivalently $x=M^{-\frac{1}{2}}Uz$), the above \emph{iff} statement can be rewritten as 
$$Wy = \lambda y \quad \mbox{ \emph{iff} } \quad \exists z\in\bbC^n,\ y = VM^{-\frac{1}{2}}Uz \text{ and }
\Gamma z = \frac{\lambda}{N}(U^*M^{-1}U)z.$$ 
From $\Gamma z = \frac{\lambda}{N}(U^*M^{-1}U)z$, we observe that $z$ is a $\lambda$-eigenvector of $N(U^*MU)\Gamma$. 
In addition, 
$ y = Vx=\frac{1}{\sqrt{N}}(DM^{-1}U)z$, so $z$ and $y$ satisfy the equations from \cref{part-thm: general-decomp-Y-Z-rel}. 
So to finish the proof, we only need to show that $\widetilde{M}=U^* M U$.

For $k,l\in[n]$, a direct calculation shows that the $(k, l)$-entry of $U^*MU$ is given by 
$$\frac{1}{n}\sum_{j=1}^n \mu_j\overline{\chi_k(g_j)}\chi_l(g_j)=\frac{1}{n}\sum_{j=1}^n \mu_j(\chi_k^{-1}\chi_l)(g_j).$$
Finally, let $\chi,\chi_k,\chi_\ell\in\widehat{\bbG}$ and suppose $\iota(\chi\chi_k)=k',\iota(\chi\chi_\ell)=\ell'$. Then, the 
$(k', \ell')$-entry of $U^*M U$ is given by 
$\frac{1}{n}\sum_{j=1}^n \mu_j\overline{\chi_{k'}(g_j)}\chi_{\ell'}(g_j)=\frac{1}{n}\sum_{j=1}^n \mu_j\overline{\chi(g_j)\chi_{k}(g_j)}\chi(g_j)\chi_{\ell}(g_j),$
    which simplifies to the $(k, \ell)$-entry of $U^*M U$, since $\overline{\chi(g_j)}\chi(g_j)=1$ for all $j$. Taking $\chi=\chi_k^{-1}$, and noting that $\chi_1$ is the identity element of the group $\widehat{\bbG}$ finishes the proof. 
\end{proof}

The applicability of \cref{thm: general-decomp} in practice is due to the fact that $\widetilde{M}$ has certain symmetric features, which result in  
simplified computations for the eigen-decomposition of $\widetilde{M}\Gamma$. We discuss the case of $\bbG$-SBM models with a dominant block below, where we can offer a recipe for computing eigenvectors of $\widetilde{M}\Gamma$ with zero mean.
\begin{proposition}\label{prop:one-large-block}
Let $\bbG$ be an Abelian group of size $n$ with neutral element $e_{\bbG}$.
Let $W$ be the model matrix of $\SBM(A,\mu,N)$, where $A$ is a Cayley matrix on $\bbG$, and let $V$ be the isometry as in \cref{def:M-A-mu-D}.
Let $\tau \in (0,\frac{1}{n})$ be such that $\tau N$ is an integer. Define the measure $\mu$ as follows:
$$\mu(\{e_{\bbG}\})=1-\tau(n-1)\ \mbox{ and }\ \mu(\{g\})=\tau, \forall \ g\in \bbG\setminus\{e_{\bbG}\}.$$
\begin{itemize}
\item[(i)] Let $m>1$. If $\gamma$ is an eigenvalue of $A$ of  multiplicity $m$ then
$N\tau \gamma$ is an eigenvalue of $W$ of multiplicity at least $m-1$. 
\item[(ii)] Let $\{\chi_{\alpha_i}: 1\leq i\leq m\}$ be an eigenbasis of characters for the $\gamma$-eigenspace of $A$ as in \cref{lem:diagonalization-Cayley}. Then 
{$\left\{V\left((i-1)\chi_{\alpha_i}-\sum_{j=1}^{i-1}\chi_{\alpha_j}\right): 1<i\leq m\right\}$} are {orthogonal} $N\tau \gamma$-eigenvectors of $W$. 
\end{itemize}
\end{proposition}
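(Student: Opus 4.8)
The plan is to reduce everything to the weighted probability matrix $A_\mu$ via \cref{prop:lin-alg-compression}, and then to exploit the special structure that the chosen measure $\mu$ imposes. Since the claimed eigenvalue of $W$ is $N\tau\gamma$ and \cref{prop:lin-alg-compression} relates $\lambda$-eigenvectors of $W$ to $\tfrac{\lambda}{N}$-eigenvectors of $A_\mu$ through the isometry $V$, it suffices to produce $m-1$ mutually orthogonal $\tau\gamma$-eigenvectors of $A_\mu$ and push them forward by $V$ (which preserves orthogonality, being an isometry). The engine of the argument is the observation that, for this $\mu$, the weight matrix is $M=\tau I+(1-n\tau)\,\Pi$, where $\Pi$ is the orthogonal projection onto the coordinate of the neutral element $e_{\bbG}$; that is, $M$ equals $\tau I$ except for a rank-one correction supported at $e_{\bbG}$. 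Consequently $M^{1/2}$ acts as the scalar $\sqrt{\tau}$ on every vector that vanishes at the coordinate $e_{\bbG}$.

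First I would record the link between the zero-mean condition and vanishing at $e_{\bbG}$: since every character satisfies $\chi(e_{\bbG})=1$, a combination $X=\sum_{i} c_i\chi_{\alpha_i}$ of the eigen-characters satisfies $X(e_{\bbG})=\sum_i c_i$, so $X$ vanishes at $e_{\bbG}$ exactly when its coefficients sum to zero. For such an $X$ we have $M^{1/2}X=\sqrt{\tau}\,X$; moreover $AX=\gamma X$ is a scalar multiple of $X$ and hence also vanishes at $e_{\bbG}$, so $M^{1/2}$ again acts as $\sqrt{\tau}$ on it. Chaining these facts gives
\[
A_\mu X \;=\; M^{1/2}AM^{1/2}X \;=\;\sqrt{\tau}\,M^{1/2}(AX)\;=\;\sqrt{\tau}\,\gamma\,M^{1/2}X\;=\;\tau\gamma\,X ,
\]
so every zero-mean combination of the $\chi_{\alpha_i}$ is a $\tau\gamma$-eigenvector of $A_\mu$.

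Part (i) then follows by a dimension count (we may assume $\gamma\neq0$, as otherwise $N\tau\gamma$ is the zero eigenvalue and the claim is vacuous): the zero-mean subspace of $\mathrm{span}\{\chi_{\alpha_1},\dots,\chi_{\alpha_m}\}$ is $(m-1)$-dimensional and consists of $\tau\gamma$-eigenvectors of $A_\mu$, so \cref{prop:lin-alg-compression}\,\ref{part-prop:lin-alg:multiplicity} transfers this lower bound to the multiplicity of $N\tau\gamma$ as an eigenvalue of $W$. For part (ii) I would check that each $X_i=(i-1)\chi_{\alpha_i}-\sum_{j=1}^{i-1}\chi_{\alpha_j}$ has coefficients summing to $(i-1)-(i-1)=0$, hence is a $\tau\gamma$-eigenvector of $A_\mu$ by the above; applying \cref{prop:lin-alg-compression}\,\ref{part-prop:lin-alg:X-eigen->Y-eigen} yields that $VX_i$ is an $N\tau\gamma$-eigenvector of $W$. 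Since $V$ is an isometry, orthogonality of the $VX_i$ reduces to $\langle X_i,X_{i'}\rangle=0$ for $i\neq i'$, which is the routine verification that the $X_i$ are Helmert-type contrasts, carried out with the character orthogonality relations $\langle\chi_{\alpha_a},\chi_{\alpha_b}\rangle=n\,\delta_{ab}$.

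The main obstacle is conceptual rather than computational: recognizing that the zero-mean constraint is precisely what localizes these vectors away from the dominant block $e_{\bbG}$, thereby turning the awkward conjugation $M^{1/2}AM^{1/2}$ into scalar multiplication. Once this is seen, both parts are short. As an alternative consistent with \cref{thm: general-decomp}, one may instead compute $\widetilde{M}=U^{*}MU=\tau I+(\tfrac1n-\tau)\,\mathbf{1}\mathbf{1}^{\t}$ from the character orthogonality relations and diagonalize $\widetilde{M}\Gamma$ directly; its $\tau\gamma$-eigenvectors are exactly those supported on $\{\alpha_1,\dots,\alpha_m\}$ with zero mean, recovering the same family after translating back through $V$.
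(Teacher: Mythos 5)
Your proof is correct, and it takes a genuinely different route from the paper's. The paper proves the proposition as an application of \cref{thm: general-decomp}: using character orthogonality it computes $\widetilde{M}=\frac{1-n\tau}{n}J+\tau I$, observes that the standard-basis contrasts $\nu_i=(i-1)e_{\alpha_i}-\sum_{j=1}^{i-1}e_{\alpha_j}$ satisfy $J\nu_i=0$ and hence $\widetilde{M}\Gamma\nu_i=\tau\gamma\nu_i$, and then translates back to eigenvectors of $W$ via $Y_i=\frac{1}{\sqrt{N}}DM^{-1}U\nu_i$, finishing with a computation (using that $U\nu_i$ vanishes at the $e_{\bbG}$-coordinate) showing $Y_i=\frac{1}{\sqrt{\tau}}V(U\nu_i)$. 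You bypass $\widetilde{M}\Gamma$ entirely and argue in the spatial domain: since $M=\tau I+(1-n\tau)\Pi$ with $\Pi$ the coordinate projection at $e_{\bbG}$, and since zero-mean combinations of the $\chi_{\alpha_i}$ vanish at $e_{\bbG}$, the conjugation $M^{1/2}AM^{1/2}$ acts on such combinations as $\tau A$, so they are $\tau\gamma$-eigenvectors of $A_\mu$, and \cref{prop:lin-alg-compression} pushes them forward to $W$. The two arguments are Fourier-dual to one another ($U^*\Pi U=\frac{1}{n}J$, so your zero-mean condition at $e_{\bbG}$ is exactly the paper's condition $J\nu_i=0$ read in the character basis), but the bookkeeping differs. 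Your route is more self-contained---it needs only \cref{prop:lin-alg-compression} and \cref{lem:diagonalization-Cayley}, not \cref{thm: general-decomp}---and it lands directly on the vectors $V\left((i-1)\chi_{\alpha_i}-\sum_{j=1}^{i-1}\chi_{\alpha_j}\right)$ in exactly the form the statement asserts, saving the paper's final simplification of $DM^{-1}U\nu_i$. What the paper's route buys is a worked illustration of the $\widetilde{M}\Gamma$ machinery, which is the advertised purpose of this proposition in that section.

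Two minor points of precision. First, in (i) you set aside $\gamma=0$ as ``vacuous''; it is not vacuous, but it is easy, and in fact your part (ii) argument covers it: there $A_\mu X_i=0$, so $W(VX_i)=NVA_\mu V^\t VX_i=NVA_\mu X_i=0$, giving $m-1$ orthogonal vectors in $\ker W$. (The restriction matters only because part (c) of \cref{prop:lin-alg-compression} is stated for nonzero eigenvalues; note that the paper's own proof likewise implicitly assumes $\gamma\neq 0$, since \cref{thm: general-decomp} is stated for $\lambda\neq 0$.) Second, part (b) of \cref{prop:lin-alg-compression} is stated for unit vectors while your $X_i$ are not normalized; either rescale them or invoke the same one-line identity $W(VX_i)=NVA_\mu X_i$ just used.
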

\begin{proof}
Let $\bbG=\{g_1,\ldots,g_n\}$ be labeled such that $g_1=e_{\bbG}$.
Let $\chi_1$ denote the trivial character of $\bbG$, i.e., $\chi_1(g)=1$ for all $g\in\bbG$.
Using the orthogonality of characters,  we get
\begin{eqnarray*}
    \sum_{g\in \bbG\setminus\{e_{\bbG}\}}\chi_k^{-1}(g)\chi_l(g)&=&\sum_{g\in \bbG\setminus\{e_{\bbG}\}}\overline{\chi_k(g)}\chi_l(g)=\langle\chi_l,\chi_k\rangle-\overline{\chi_k(e_{\bbG})}\chi_l(e_{\bbG})=\left\{\begin{array}{cc}
        -1 & k\neq l \\
       n-1  & k=l
    \end{array}\right..
\end{eqnarray*}
So the matrix $\widetilde{M}$ from \cref{thm: general-decomp} is given by $\widetilde{M}=\frac{1-n\tau}{n}J+\tau I$, where $J$ is the all 1 matrix, and $I$ is the identity matrix. 

Now, suppose $A$ has a repeated eigenvalue $\gamma=\gamma_{\alpha_1}=\ldots=\gamma_{\alpha_m}$. So by \cref{lem:diagonalization-Cayley}, there exist characters $\chi_{\alpha_1},\ldots, \chi_{\alpha_m}$ in $\widehat{\bbG}$ such that
$\gamma=\sum_{g\in\bbG}f(g)\overline{\chi_{\alpha_1}(g)}=\ldots=\sum_{g\in\bbG}f(g)\overline{\chi_{\alpha_m}(g)},$
where $f$ is the connection function of $A$.
By definition,  the value $\gamma$ is repeated on the $\alpha_1,\ldots, \alpha_m$ entries of the diagonal matrix $\Gamma$. 
Let $\{e_i\}_{i=1}^{n}$ denote the standard basis of $\bbC^n$. 
The set {$\{\nu_i:=(i-1)e_{\alpha_i}-\sum_{j=1}^{i-1}e_{\alpha_j}:\ 1<i\leq m\}$ is an orthogonal} set of $\gamma$-eigenvectors of $\Gamma$ satisfying
{$J\nu_i=0$} for $2\leq i\leq m$. So, we have
$$
\widetilde{M}\Gamma {\nu_i}=\gamma (\frac{1-n\tau}{n}J+\tau I){\nu_i}=\tau\gamma {\nu_i}, \ \forall\ 2\leq i\leq m.
$$ 
Using the notations from \cref{thm: general-decomp}, let $y_i=\frac{1}{\sqrt{N}}DM^{-1}U{\nu_i}$ for $1<i\leq m$.
We have 
 \begin{eqnarray*}
     y_i&=&\frac{1}{\sqrt{N}}DM^{-1}{\left( (i-1)\chi_{\alpha_i}-\sum_{j=1}^{i-1}\chi_{\alpha_j}\right)}
    =VM^{-\frac{1}{2}}{\left( (i-1)\chi_{\alpha_i}-\sum_{j=1}^{i-1}\chi_{\alpha_j}\right)}\\
     &=&V\left(\frac{1}{\sqrt{\tau}}\left(I+{\rm diag}(\sqrt{\frac{\tau}{1-(n-1)\tau}}-1,0,\ldots,0)\right)\right)
     {\left( (i-1)\chi_{\alpha_i}-\sum_{j=1}^{i-1}\chi_{\alpha_j}\right)}.
 \end{eqnarray*}
 Note that the first coordinate of  ${U\nu_i}$ is equal to {$(i-1)\chi_{\alpha_i}(e_{\bbG})-\sum_{j=1}^{i-1}\chi_{\alpha_{j}}(e_{\bbG})=0$}, so ${\rm diag}(\sqrt{\frac{\tau}{1-(n-1)\tau}}-1,0,\ldots,0)({U\nu_i})=0$.
 So we get
 $y_i =\frac{1}{\sqrt{\tau}}V{\left( (i-1)\chi_{\alpha_i}-\sum_{j=1}^{i-1}\chi_{\alpha_j}\right)}.$
 The orthogonality of $\{y_i:\ 1<i\leq m\}$ follows from the orthogonality of $\{\nu_i:\ 1<i\leq m\}$ together with the fact that $U$ and $V$ are isometries.
\end{proof}

\begin{remark}[Alternative interpretation of \cref{thm: general-decomp}]
\label{remark-waighted}
The previous theorem can be understood in the context of weighted Fourier analysis on Abelian groups \cite{Benedetto92}. 
{Namely, we can interpret the effect of the weight $\mu$ as a change of measure on $\widehat{\bbG}$. 
Consider $\mu$ as the natural probability measure on $\widehat{\bbG}$ given by $\mu(\{\chi_i\})=\mu_i$, and define the weighted inner product space $\ell^2(\widehat{\bbG},\mu)$ via $\langle f,g\rangle_{\ell^2(\widehat{\bbG},\mu)}=\sum_{\chi\in\widehat{\bbG}} f(\chi)\overline{g(\chi)}\mu(\chi)$.
We have
\begin{equation}\label{eq-matrix-cal}
(\widetilde{M}\Gamma)_{k,l}=\frac{1}{n}\, (\mathcal{F} f)(\chi_l)\langle\chi_{l},\chi_{k}\rangle_{\ell^2(\widehat{\bbG},\mu)},\quad
\end{equation}
where $\mathcal{F} f$ denotes the classical Fourier transform of $f$ as a function on the Abelian group $\bbG$.
To interpret the above equation, note that when the block sizes are equal, the measure $\mu$ is uniform. In this case, the orthogonality of group characters imply that the matrix $\widetilde{M}\Gamma$ is diagonal (as it should be), since $\langle\chi_{l},\chi_{k}\rangle_{\ell^2(\widehat{\bbG},\mu)}=\frac{1}{n}\langle\chi_{l},\chi_{k}\rangle_{\bbC^n}=0$ when $l\neq k$. The formula in 
\eqref{eq-matrix-cal} allows us to witness non-uniformity of block sizes in terms of a non-diagonal $\widetilde{M}\Gamma$.
}
\end{remark}

\section{Example:  $\bbZ_5$-SBMs with different block sizes}
\label{sec:experiment}
To illustrate the theory presented in \cref{sec:main} and \cref{sec:cayley-general}, we consider stochastic block models based on the cyclic group $\mathbb{Z}_5$ with various block sizes. The group $\mathbb{Z}_5$ is equipped with addition modulo 5, and its elements are conventionally represented as \(\{0,1,2,3,4\}\). However, for consistency with the rest of the paper, we relabel these elements as \( g_i = i - 1 \) for \( 1 \leq i \leq 5 \).

We fix $A$ to be the Cayley matrix on $\bbZ_5$ associated with the connection function $f$ defined as
 $f(1)=f(4)=0.8$ and $f(i)=0.2$ for $i=0,2,3$. 
More precisely, the probability matrix $A=[a_{i,j}]$ is given by $a_{i,j}= f(j-i)$ for $1\leq i,j\leq 5$. 
Let $\mu=\{\mu_i\}_{i=1}^5$ be the measure representing the block sizes of SBMs.
All computations presented in this section were done  using Mathematica version 13.2.1. 
Throughout this section, we use notations from \cref{def:M-A-mu-D}.

The set of characters \(\widehat{\mathbb{Z}_5} = \{\chi_1, \dots, \chi_5\}\) is generated under pointwise multiplication by the character \(\chi: \mathbb{Z}_5 \to \mathbb{T}\), defined as \(\chi(j) = \omega^j\), where \(\omega = e^{\frac{2\pi i}{5}}\) is a primitive 5th root of unity. More explicitly, we set \(\chi_j = \chi^{j-1}\) for \(1 \leq j \leq 5\).  
Applying \cref{lem:diagonalization-Cayley}, the eigenvalues of \( A \) are given by  
\[
\lambda_i := \sum_{j=0}^{4} f(j) \omega^{-(i-1)j}, \quad \text{for } 1 \leq i \leq 5.
\]  
Since \( f(j) = f(-j) \), it follows that \(\lambda_2 = \lambda_5\) and \(\lambda_3 = \lambda_4\).  
As stated in \cref{thm: general-decomp}, we utilize the eigen-decomposition of \( \widetilde{M} \Gamma \) to determine the SBM Fourier basis for this graphon, where  $\Gamma={\rm diag}(\lambda_1,\ldots,\lambda_5)$ and 
\[
\widetilde{M} = \left[\frac{1}{5} \sum_{j=1}^{5} \mu_j (\chi_k^{-1} \chi_l)(g_j) \right]_{k,l} = \left[\frac{1}{5} \sum_{j=1}^{5} \mu_j \omega^{(l-k)(j-1)} \right]_{k,l}.
\]

\subsection{Comparing the graph Fourier basis with the SBM-driven Fourier basis}\label{sec:expA}
We first sampled from a $\bbZ_5$-SBM and compared the \emph{graph} Fourier basis with the SBM-driven Fourier basis. Recall that the graph Fourier basis is an eigenbasis of the graph adjacency matrix, while the SBM-driven Fourier basis is an eigenbasis of $W$, the model matrix. We sampled from $\SBM(A,\mu,N)$ where $A$ is the Cayley matrix described above, $N=6000$, and $\mu_1 = 1/3$, $\mu_2=\dots = \mu_5 = 1/6$ (so there is one block of size $2000$ and four of size $1000$).  We computed the $\bbZ_5$-SBM Fourier basis from the matrix $\widetilde{M}\Gamma$ as described in \cref{thm: general-decomp}. 

We took six different samples from the $\SBM(A,\mu,N)$.
The samples yielded very similar results, so we include the results of one of them.  \cref{tab:eigenvalues} gives the eigenvalues of the model matrix as given by \cref{thm: general-decomp} compared with the first six eigenvalues of the adjacency matrix of the sampled graph. 
As explained in \cref{subsec:GraphonModelMatrix}, since $\SBM(A,\mu,N)$ consists of five blocks, 
the SBM-driven Fourier basis only takes the eigenvectors corresponding to the first five eigenvalues into account, as the projection on the remaining eigenvectors may be viewed as noise. We include results for the sixth eigenvalue to show that indeed, $\lambda_6$ of the sample is significantly smaller (in magnitude) than the other eigenvalues. 
It is apparent from the table that the nonzero eigenvalues of $W$ have multiplicity 1, so  we have a unique eigenbasis for the range of $W$.

\begin{table}[ht]
\centering

\begin{tabular}{|l|c|c|c|c|c|c|}
\hline
 & $\lambda_1$ & $\lambda_2$ & $\lambda_3$ & $\lambda_4$ & $\lambda_5$ & $\lambda_6$ \\
\hline
Model matrix & 2622.1 & -1290.3 & -970.82 &468.1 & 370.8 & 0 \\
\hline
Sample adjacency matrix & 2622.2 & -1291.6 & -970.85 & 469.8 & 373.1 & -61 \\
\hline
\end{tabular}
    \caption{Eigenvalues of the model matrix $W$  compared with eigenvalues of the adjacency matrix of a sample.}
   \label{tab:eigenvalues}
    \vspace{-0.2cm}
\end{table}

We then compared the first five eigenvectors $\phi_i$ of the model matrix $W$ with the corresponding eigenvectors $\varphi_i$ of the adjacency matrix of the sampled graph by computing the absolute value of the inner product  $|\langle \varphi_i, \phi_i \rangle|$, to quantify alignment between the subspaces. The eigenvectors of $W$ were obtained by computing the eigenvectors of $\widetilde{M}\Gamma$ and applying \cref{part-thm: general-decomp-Y-Z-rel} to obtain $\phi_i = \frac{1}{\sqrt{N}}(DM^{-1}U)z_i$, where $z_1,\ldots, z_5$ are the eigenvectors of the matrix $\widetilde{M}\Gamma$ arranged according to eigenvalue magnitude in descending order. The close agreement in \cref{tab:W_and_sbm_evec_comparison} aligns with the convergence result stated in \cref{thm:convergence}.
\begin{table}[ht]
    \centering
    \begin{tabular}{|c|c|c|c|c|}
    \hline
    $|\langle \varphi_1, \phi_1\rangle |$ & $|\langle \varphi_2, \phi_2\rangle |$ & $|\langle \varphi_3, \phi_3\rangle |$ & $|\langle \varphi_4, \phi_4\rangle |$ & $|\langle \varphi_5, \phi_5\rangle |$ \\
    \hline
    0.99993 & 0.999712 & 0.999493 & 0.997782 & 0.996411 \\
    \hline
    \end{tabular}
    \caption{Absolute value of the inner product between the $i$-th eigenvector $\varphi_i$ of the sampled graph adjacency matrix and the corresponding eigenvector $\phi_i$ of the SBM model matrix $W$, computed via \cref{thm: general-decomp}.}
    \label{tab:W_and_sbm_evec_comparison}
\end{table}

\subsection{Comparing a transferred character basis with the SBM-basis}
\label{sec:VaryingBlockSize}
Let $A$ be the Cayley matrix as given in \cref{sec:expA}. Using the characters of $\bbZ_5$, we fix the eigenbasis of $A$ consisting of the following real-valued unit vectors: 
$$
\phi_1 = \frac{\chi_1}{\sqrt{5}},\, \phi_2 = \frac{\chi_{3}+\chi_{4}}{\sqrt{10}},\, \phi_3 = \frac{\chi_{4} - \chi_{3}}{i\sqrt{10}}, \, \phi_4 = \frac{\chi_{2}+\chi_{5}}{\sqrt{10}},\, \phi_5 = \frac{\chi_{2} - \chi_{5}}{i\sqrt{10}}.
$$
For $1\leq k\leq 20$, define the measure $\mu^k$ on $\bbZ_5$ to be 
\[\mu^k_1=\frac{60+4k}{300}\  \mbox{ and } \ \mu^k_i=\frac{60-k}{300}\ \mbox{ for } 2\leq i\leq 5.\]
We set $N=3000$, and consider the isometry $V_k$ associated with $\SBM(A, \mu^k,N)$ given in \cref{def:M-A-mu-D} \ref{def:V-isom}.
Inspired by \cref{prop:lin-alg-compression} \ref{part-prop:lin-alg:X-eigen->Y-eigen}, we use $V_k$ to transfer the basis $\{\phi_i\}_{i=1}^5$ of $A$ to an orthonormal set $\{\xi_i^k\}_{i=1}^5\subseteq {\mathbb C}^N$ defined as  $\xi_i^k:=V_k\phi_i$. We call $\{\xi_i^k\}_{i=1}^5$ the \emph{transferred character basis} for the range of the model matrix $W_k$ of $\SBM(A, \mu^k,N)$.
Note that the transferred character basis is \emph{not} an eigenbasis of the range of $W_k$ unless $\mu^k$ is uniform. 
We now present a numerical study to compare the transferred character basis with the SBM-basis $\{ y_i^k\}_{i=1}^n$ for $W_k$ derived in \cref{thm: general-decomp}.
 
This provides us with a class of examples of SBMs in which one block is larger than the rest, and the difference in block sizes becomes more pronounced as $k$ increases.

\begin{figure}[ht]
\subfloat[\centering Plot of the inner products between transferred character basis $\{\xi_i^k\}_{i=1}^5$ and the SBM Fourier basis $\{y^k_i\}_{i=1}^5$ for $k=1,\ldots, 20$]{{\includegraphics[width=7cm]{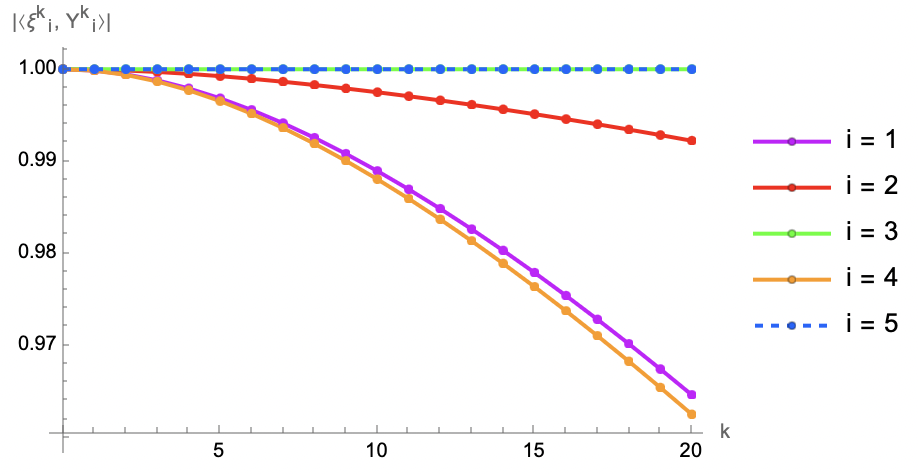} }}
\qquad
\subfloat[\centering Limiting graphon when $k=0$]{{\includegraphics[width=2.5cm]{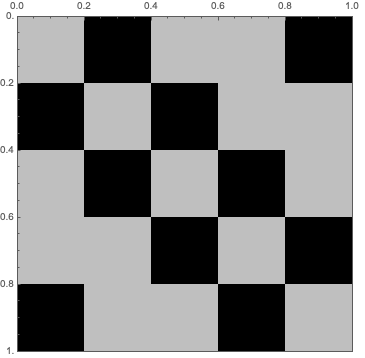} }}%
    \qquad
    \subfloat[\centering Limiting graphon when $k=20$]{{\includegraphics[width=2.5cm]{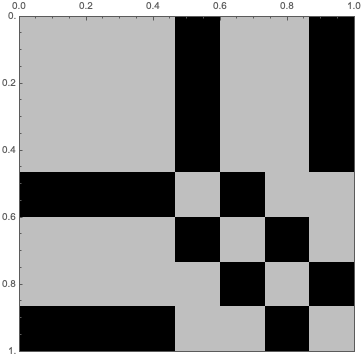} }}

\caption{Comparison of transferred character basis with SBM Fourier basis for non-uniform measures}
    \label{fig:one-large-block-test}
    
\end{figure}

\cref{fig:one-large-block-test} gives the inner product of the corresponding eigenvectors in the two bases. That is, for a given $i$ and $k$ the inner product $\langle \xi_i^k,y_i^k\rangle$ is displayed. We see that the agreement between the transferred character basis and the SBM Fourier basis is quite good, although it does decrease as the block sizes become less uniform. 
 \cref{fig:one-large-block-test} also demonstrates \cref{prop:one-large-block}, since we are in the case of one large block. By this proposition, $\xi_3^k$ and $\xi_5^k$ are eigenvectors of $W_k$, and thus belong to the SBM-driven Fourier basis. 
We see in the figure that indeed  $\xi_3^k$ and  $y_3^k$ are in perfect agreement for  all $k$. The same is true for $\xi_5^k$ and $y_5^k$.

\subsection{Three different block sizes}\label{subsec-3block}
Next, we consider the measure  $\mu^k_1 = \frac{30+2k}{150}$, $\mu^k_2 = \frac{30+k}{150}$ and $\mu^k_3=\mu^k_4=\mu^k_5 = \frac{30-k}{150}$ and take $N=150$. We let $k$ vary from 0 to 20. Again, we compared the transferred character basis, computed as in the previous section, with the SBM  Fourier basis, computed using \cref{thm: general-decomp}. For each $k$, let $\{y^k_i\}_{i=1}^N$ be the eigenvectors of the model matrix $W_k$.
 
As before, we compute the inner product between $\{y^k_i\}_{i=1}^5$ and the corresponding vectors of the transferred character basis $\{\xi^k_i\}_{i=1}^5$.
The results are shown in \cref{fig:decayPlot and modelComparison} (a). At $k=0$, all block sizes are equal so we have perfect agreement. As $k$ increases, the agreement deteriorates more quickly than in the case of just two distinct block sizes.

\begin{figure}[ht]

 \centering
    \subfloat[Plot of the magnitude of the inner product between the transferred character basis $\{\xi^k_{i}\}_{i=1}^5$ and the SBM Fourier basis $\{y^k_{i}\}_{i=1}^5$ for $k=0,\ldots, 20$.]{{\includegraphics[width=7cm]{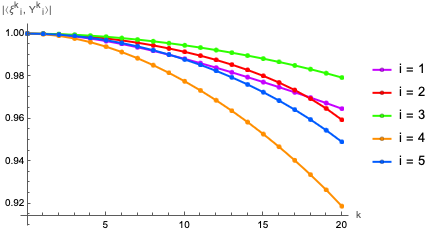} }}%
    \qquad
    \subfloat[ Inner product between the transferred character basis elements $\{\xi^m_i\}_{i=1}^5$ and the corresponding SBM basis $\{y^m_i\}_{i=1}^5$ of model 1 from \cref{sec:VaryingBlockSize} (blue) and model 2 from \cref{subsec-3block} (red).]{{\includegraphics[width=7cm]{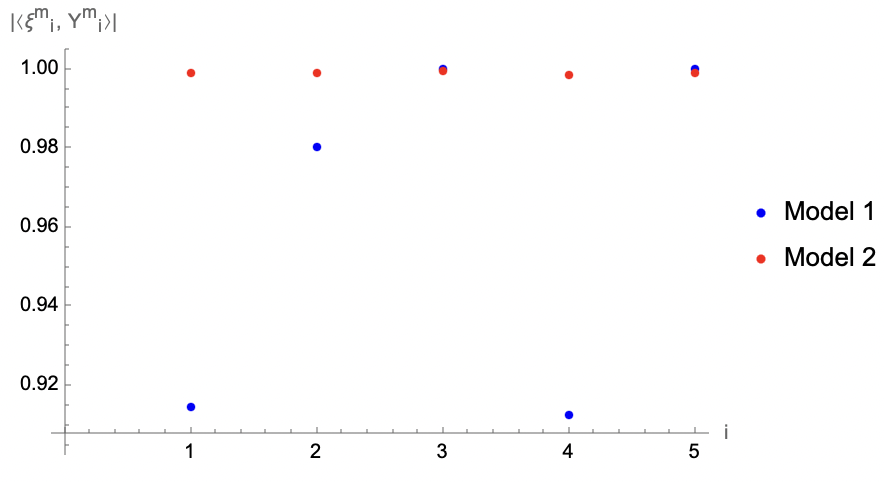} }}%
    \caption{Comparison of graphon with 1 perturbed block versus 2 perturbed blocks}%
    \label{fig:decayPlot and modelComparison}%
    \vspace{-0.2cm}
\end{figure}

In \cref{fig:decayPlot and modelComparison} (b), we compare two different models.  Model 1 is constructed from a measure that gives blocks of size 2000, 250, 250, 250, 250, and Model 2 is constructed from blocks of size 1350, 1344, 1102, 1102, 1102. We observe that  the agreement between the transferred character basis and the actual SBM basis for model 2 is strong. However, for model 1 there is only strong agreement for $\xi_3$ and $\xi_5$. This is despite the relatively large perturbations in the block sizes present in model 1. As we saw earlier, $\xi_3$ and $\xi_5$ are SBM basis elements in  model 1. Model 2 has smaller deviation from the uniform measure, and gives better performance overall.

\section*{Acknowledgements}
MG acknowledges support from NSF Grant DMS-2408008, NSF Grant CCF-2427965 and Simons Travel Support for Mathematicians. 
JJ was funded by an NSERC Discovery Grant. This paper was initiated while two of the authors conducted an RIT
visit at the Banff International Research Center. We are grateful to BIRS for the financial
support and hospitality. 
This material is also based upon work supported by the Swedish Research Council under grant no. 2021-06594 while M.G.~was in residence at Institut Mittag-Leffler in Djursholm, Sweden during the semester ``Operator
algebras and Quantum information theory'' Spring, 2026.
Finally, we sincerely thank the editor and anonymous reviewers for critically reading the manuscript and suggesting numerous improvements.

\newpage
\appendix
\section{Matrix perturbation theory}\label{sec:perturb}

 The relation between the eigenvalues and eigenvectors of a symmetric, real-valued matrix $A$ and a perturbed matrix $\tilde{A}$ has long been the object of study. For an overview, see \cite{StewartSun1990}. In the following, let $A$ and $\tilde{A}$ be symmetric matrices in $\bbR^{n\times n}$, where $H=A-\tilde{A}$ is considered to be the error between the matrix $A$ and its perturbation $\tilde{A}$. Assume that $A$ has eigenvalues $\lambda_1\geq \lambda_2\geq\dots\geq \lambda_n$, and $\tilde{A}$ has eigenvalues $\tilde\lambda_1\geq \tilde\lambda_2\geq\dots\geq \tilde\lambda_n$. Where appropriate, we use `dummy values' $\lambda_0=\infty$ and $\lambda_{n+1}=-\infty$.
 
The difference between eigenspaces of $A$ and $\tilde{A}$ can be bounded via the Davis--Kahan theorem, which is stated in terms of principal angles between subspaces. Let $\ee$ and $\tilde{\ee}$ be $d$-dimensional subspaces of $\bbR^n$, and let $V$ and $\tilde{V}$ be {$n\times d$} matrices whose columns form an orthonormal basis of $\ee$ and $\tilde{\ee}$, respectively. The \emph{principal angles} of $(\ee,\tilde{\ee})$ are the angles $\theta_i=\cos^{-1}(\rho_i)$, $i\in [d]$, where $\rho_1,\rho_2,\dots, \rho_d$ are singular values of $V^\t\tilde{V}$. Let $\Theta={\rm diag}(\theta_1,\theta_2,\dots ,\theta_d)$ be the matrix representing the principal angles, and let $\sin(\Theta)$ be the matrix obtained from $\Theta$ by taking the sine component-wise. The distance between $\ee$ and $\tilde{\ee}$ is given by 
\begin{equation*}\|\sin(\Theta(\ee,\tilde{\ee}))\|_F,
\end{equation*} 
where $\|\cdot \|_F$ denotes the Frobenius norm. 
We now give a bound for this distance. The bound is taken from \cite{YuWangSamWorthDavisKahan2015} and is a variant of the Davis--Kahan theorem. This bound only requires the gap in eigenvalues of one of the matrices, making it suitable for a context where one of the matrices has known properties, and the second matrix is a perturbation of the first.

\begin{theorem}[\cite{YuWangSamWorthDavisKahan2015}, Davis--Kahan]\label{thm:daviskahan}

Let $A$ and $\tilde{A}$ be two symmetric matrices in $\bbR^{n\times n}$, with eigenvalues $\lambda_1\geq \lambda_2\geq \dots \geq \lambda_n$ and $\lambda'_1\geq \lambda_2'\geq \dots \geq \lambda'_n$, respectively. 
Fix $1\leq r\leq s\leq n$ and assume that $\lambda_r=\lambda_{r+1}=\dots =\lambda_s=\lambda$. Assume $\min\{ \lambda_{r-1}-\lambda,\lambda-\lambda_{s+1}\}>0$. Let $d=s-r+1$ be the multiplicity of $\lambda$. Let $\ee$ be the eigenspace of $A$ corresponding to $\lambda$, and let $\tilde{\ee}$ be the space spanned by orthogonal eigenvectors of $\tilde{A}$ corresponding to eigenvalues $\tilde\lambda_r,\tilde\lambda_{r+1},\dots ,\tilde\lambda_{s}$. 
Then
\begin{equation*}
   \|\sin(\Theta(\ee,\tilde{\ee}))\|_F\leq \frac{2\min\{d^{1/2}\|A-\tilde{A}\|_{\opr},  \|A-\tilde{A}\|_F\}}{\min\{ \lambda_{r-1}-\lambda,\lambda-\lambda_{s+1}\}}
\end{equation*}
\end{theorem}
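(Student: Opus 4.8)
Since the bound is quoted verbatim from \cite{YuWangSamWorthDavisKahan2015}, the plan is to reconstruct the standard route to this particular variant, whose distinctive feature is that only the eigengap of the \emph{unperturbed} matrix $A$ enters. Write $H=A-\tilde A$ and $\delta=\min\{\lambda_{r-1}-\lambda,\ \lambda-\lambda_{s+1}\}$. Let $V$ be an $n\times d$ matrix whose columns form an orthonormal basis of $\ee$, so that $AV=\lambda V$, and let $\tilde V_\perp$ be an $n\times(n-d)$ matrix whose columns are orthonormal eigenvectors of $\tilde A$ spanning $\tilde\ee^{\,\perp}$, with $\tilde A\tilde V_\perp=\tilde V_\perp\tilde\Lambda_\perp$ and $\tilde\Lambda_\perp={\rm diag}(\tilde\lambda_j:j\notin\{r,\dots,s\})$. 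The first step is the standard identification of the principal-angle distance with a compression: directly from the definition of principal angles one has $\|\sin(\Theta(\ee,\tilde\ee))\|_F=\|\tilde V_\perp^\t V\|_F$, so the entire problem is to control $\|\tilde V_\perp^\t V\|_F$.

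The key algebraic step is a Sylvester-type identity. Using $AV=\lambda V$ and $\tilde V_\perp^\t\tilde A=\tilde\Lambda_\perp\tilde V_\perp^\t$,
\begin{equation*}
\tilde V_\perp^\t H V=\tilde V_\perp^\t A V-\tilde V_\perp^\t\tilde A V=\lambda\,\tilde V_\perp^\t V-\tilde\Lambda_\perp\,\tilde V_\perp^\t V=(\lambda I-\tilde\Lambda_\perp)\,\tilde V_\perp^\t V .
\end{equation*}
If every diagonal entry of $\lambda I-\tilde\Lambda_\perp$ is bounded below in modulus by some $\eta>0$, then $\lambda I-\tilde\Lambda_\perp$ is invertible with $\|(\lambda I-\tilde\Lambda_\perp)^{-1}\|_\opr\le\eta^{-1}$, and the identity gives $\|\tilde V_\perp^\t V\|_F\le\eta^{-1}\|\tilde V_\perp^\t H V\|_F$. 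Here both a Frobenius and an operator estimate of the right-hand side are available: since $V,\tilde V_\perp$ have orthonormal columns, $\|\tilde V_\perp^\t H V\|_F\le\|H\|_F$, while $\tilde V_\perp^\t H V$ has only $d$ columns, so $\|\tilde V_\perp^\t H V\|_F\le d^{1/2}\|\tilde V_\perp^\t H V\|_\opr\le d^{1/2}\|H\|_\opr$; taking the smaller of the two is exactly what produces the $\min\{d^{1/2}\|H\|_\opr,\|H\|_F\}$ in the numerator.

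The remaining task is to supply the separation $\eta$, and this is where the eigengap of $A$ alone must be leveraged through Weyl's inequality $|\tilde\lambda_j-\lambda_j|\le\|H\|_\opr$. When $\|H\|_\opr<\delta/2$, Weyl's inequality relocates the complementary eigenvalues of $\tilde A$ away from $\lambda$: for $j\le r-1$ one gets $\tilde\lambda_j\ge\lambda_{r-1}-\|H\|_\opr\ge\lambda+\delta/2$, and for $j\ge s+1$ one gets $\tilde\lambda_j\le\lambda_{s+1}+\|H\|_\opr\le\lambda-\delta/2$, so $\eta=\delta/2$ is admissible and the previous paragraph yields precisely the claimed bound. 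When instead $\|H\|_\opr\ge\delta/2$ the inversion argument breaks down, but the operator-norm half of the bound holds for a softer reason: $\|\sin\Theta\|_F\le d^{1/2}$ always, and $2d^{1/2}\|H\|_\opr/\delta\ge d^{1/2}$ in this regime.

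I expect the genuine obstacle to be the large-perturbation regime $\|H\|_\opr\ge\delta/2$ for the \emph{Frobenius} half of the estimate, where the crude bound $\|\sin\Theta\|_F\le d^{1/2}$ need not be dominated by $2\|H\|_F/\delta$ when $d>1$. Reconciling this is the sharper accounting underlying \cite{YuWangSamWorthDavisKahan2015}: one would split the complementary indices into those whose perturbed eigenvalue remains at distance $\ge\delta/2$ from $\lambda$, which are controlled by the Sylvester identity exactly as above, and those that drift into the gap, which must be shown to carry little of the mass of $\tilde V_\perp^\t V$. Arranging this bookkeeping so that it produces precisely the factor $2$ together with the clean $\min$ of the two norms is the delicate point, and is the main reason this particular variant is stated in the form above.
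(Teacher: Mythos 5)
The theorem is quoted verbatim from \cite{YuWangSamWorthDavisKahan2015}; the paper gives no proof of it, so the comparison must be with the cited source's argument. Your reconstruction follows the classical Davis--Kahan route (the identity $\tilde V_\perp^\t H V=(\lambda I-\tilde\Lambda_\perp)\tilde V_\perp^\t V$ plus Weyl's inequality), and that part is sound: it yields the stated bound whenever $\|H\|_\opr<\delta/2$, and the operator-norm half of the bound unconditionally. But the hole you flag in your last paragraph is a genuine gap, and it is precisely the content of this variant: when $\delta/2\le\|H\|_\opr$ while $\|H\|_F<d^{1/2}\delta/2$, the diagonal factor $\lambda I-\tilde\Lambda_\perp$ may be singular (perturbed eigenvalues can drift into the gap), the crude bound $\|\sin\Theta\|_F\le d^{1/2}$ does not dominate $2\|H\|_F/\delta$, and your proposed repair---splitting the complementary indices by how far their perturbed eigenvalues drift---is not how the cited proof works and leaves unaddressed exactly the question of how much mass the drifted directions carry. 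So as written, the Frobenius half of the claimed $\min$ is unproven.

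The actual proof in \cite{YuWangSamWorthDavisKahan2015} is case-free and uses a different mechanism. Since every eigenvalue of $A$ other than $\lambda$ is at distance at least $\delta$ from $\lambda$, one has the positive-semidefinite inequality $\delta^2(I-P_\ee)\preceq (A-\lambda I)^2$, where $P_\ee$ is the orthogonal projection onto $\ee$. Letting $\tilde V$ be the $n\times d$ matrix whose columns are the chosen orthonormal eigenvectors of $\tilde A$, so that $\tilde A\tilde V=\tilde V\tilde\Lambda_1$ with $\tilde\Lambda_1=\mathrm{diag}(\tilde\lambda_r,\dots,\tilde\lambda_s)$, this gives
\begin{equation*}
\|\sin\Theta(\ee,\tilde\ee)\|_F^2=\mathrm{tr}\bigl(\tilde V^\t(I-P_\ee)\tilde V\bigr)\le \frac{1}{\delta^2}\,\bigl\|(A-\lambda I)\tilde V\bigr\|_F^2 .
\end{equation*}
Now decompose, with $H=A-\tilde A$,
\begin{equation*}
(A-\lambda I)\tilde V=H\tilde V+\tilde V(\tilde\Lambda_1-\lambda I_d),
\end{equation*}
so that $\|\sin\Theta\|_F\le \delta^{-1}\bigl(\|H\tilde V\|_F+\|\tilde\Lambda_1-\lambda I_d\|_F\bigr)$. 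The first term is at most $\min\{d^{1/2}\|H\|_\opr,\|H\|_F\}$ because $\tilde V$ has $d$ orthonormal columns. The second term equals $\bigl(\sum_{j=r}^s(\tilde\lambda_j-\lambda_j)^2\bigr)^{1/2}$, which is at most $d^{1/2}\|H\|_\opr$ by Weyl and at most $\|H\|_F$ by the Hoffman--Wielandt/Mirsky inequality. That last inequality---the Frobenius-norm analogue of Weyl, controlling the total eigenvalue displacement with no assumption that perturbed eigenvalues stay out of the gap---is the ingredient missing from your approach, and it is what produces the Frobenius half of the $\min$ in all regimes, with the factor $2$ arising simply from the two terms of the decomposition. (This clean version exploits that $\lambda_r=\dots=\lambda_s$, which is how the theorem is stated here; the general unequal case in the source requires only minor adjustment.)
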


An alternative way of representing the distance between the subspaces, which is relevant in the context of the graph and graphon Fourier transforms, is by quantifying the difference in the projection operators on the two subspaces. Precisely, let $P_{\ee}$ and $P_{\tilde{\ee}}$ be the projection operators onto $\ee$ and $\tilde{\ee}$, respectively. Note that $P_{\ee}$ and $P_{\tilde{\ee}}$ are represented by the matrices $VV^\t$ and $\tilde{V}{\tilde{V}}^\t$, respectively. We can represent the distance between $\ee$ and $\tilde{\ee}$ in terms of the Frobenius distance between these matrices. 

Let $\Theta=\Theta(\ee,\tilde{\ee})$. Then $\cos(\Theta)$ is the diagonal matrix containing the singular values of $V^\t\tilde{V}$. Using the singular value decomposition, we have that there exist orthogonal matrices $U_1,U_2$ so that $V^\t\tilde{V}=U_1\cos(\Theta)U_2^\t$. Now let $W\in \bbR^{n\times n-d}$ be the matrix whose columns form an orthonormal basis of the space orthogonal to $\tilde{\ee}$. Thus 
$WW^\t=I-\tilde{V}\tilde{V}^\t$ is the projection onto $\tilde{\ee}^\perp$. We then have that
$$
V^\t WW^\t V+V^\t\tilde{V}\tilde{V}^\t V=V^\t V=I_d,
$$
since $V$ is orthogonal. Now $V^\t\tilde{V}(\tilde{V}^\t V)=U_1\cos(\Theta)^2U_1^\t$ and thus
$$
(V^\t W)(V^\t W)^\t =I_d-U_1\cos(\Theta)^2U_1^\t =U_1\sin^2(\Theta)U_1^\t.
$$
So we get
\begin{equation*}
    \|\sin(\Theta(\ee,\tilde{\ee}))\|^2_F=\| V^TW\|_F^2=\| P_{\ee}(I-P_{\tilde{\ee}})\|^2_F .
\end{equation*}
Similarly, we have that $\|\sin(\Theta(\ee,\tilde{\ee}))\|^2_F=\| (I-P_{\ee})P_{\tilde{\ee}}\|^2_F$. Putting these two expressions together with the Davis--Kahan theorem we obtain the following corollary.
\begin{corollary}\label{DavisKahanProjections}
 \begin{equation}
    \| P_\ee-P_{\tilde{\ee}}\|_F\leq \frac{2^{3/2}\min\{d^{1/2}\|A-\tilde{A}\|_{\opr},  \|A-\tilde{A}\|_F\}}{\min\{ \lambda_{r-1}-\lambda,\lambda-\lambda_{s+1}\}}.
\end{equation}   
\end{corollary}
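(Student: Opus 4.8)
The plan is to deduce the corollary from the Davis--Kahan bound of \cref{thm:daviskahan} together with the two identities
$$\|\sin(\Theta(\ee,\tilde{\ee}))\|^2_F=\| P_{\ee}(I-P_{\tilde{\ee}})\|^2_F=\| (I-P_{\ee})P_{\tilde{\ee}}\|^2_F$$
that were already established in the discussion preceding the statement. The only genuinely new ingredient needed is a clean relation between the Frobenius distance $\|P_\ee-P_{\tilde{\ee}}\|_F$ of the two orthogonal projections and the quantity $\|\sin(\Theta(\ee,\tilde{\ee}))\|_F$; once that is in hand, the bound follows by multiplying the Davis--Kahan estimate by a constant factor.

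First I would record the algebraic identity
$$P_\ee-P_{\tilde{\ee}}=P_\ee(I-P_{\tilde{\ee}})-(I-P_\ee)P_{\tilde{\ee}},$$
which is immediate upon expanding the right-hand side and cancelling the two copies of $P_\ee P_{\tilde{\ee}}$. The key step is then to verify that these two summands are orthogonal with respect to the Frobenius inner product $\langle X,Y\rangle_F=\operatorname{tr}(X^\t Y)$. Using that $P_\ee$ and $P_{\tilde{\ee}}$ are symmetric idempotents, I would compute
$$\left\langle P_\ee(I-P_{\tilde{\ee}}),\,(I-P_\ee)P_{\tilde{\ee}}\right\rangle_F=\operatorname{tr}\!\big[(I-P_{\tilde{\ee}})\,P_\ee(I-P_\ee)\,P_{\tilde{\ee}}\big]=0,$$
where the vanishing comes from the relation $P_\ee(I-P_\ee)=P_\ee-P_\ee^2=0$ appearing in the middle of the trace. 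This orthogonality computation is the main (and essentially the only) obstacle; everything else is either supplied above or a direct invocation of the quoted theorem.

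By the Pythagorean theorem in the Frobenius inner product, this orthogonality yields
$$\|P_\ee-P_{\tilde{\ee}}\|_F^2=\|P_\ee(I-P_{\tilde{\ee}})\|_F^2+\|(I-P_\ee)P_{\tilde{\ee}}\|_F^2=2\,\|\sin(\Theta(\ee,\tilde{\ee}))\|_F^2,$$
using the two displayed identities from the preamble. Taking square roots gives $\|P_\ee-P_{\tilde{\ee}}\|_F=\sqrt{2}\,\|\sin(\Theta(\ee,\tilde{\ee}))\|_F$. Finally I would apply \cref{thm:daviskahan} to bound $\|\sin(\Theta(\ee,\tilde{\ee}))\|_F$ by $\frac{2\min\{d^{1/2}\|A-\tilde{A}\|_{\opr},\|A-\tilde{A}\|_F\}}{\min\{\lambda_{r-1}-\lambda,\lambda-\lambda_{s+1}\}}$ and multiply by $\sqrt{2}$, producing exactly the factor $2^{3/2}$ in the claimed inequality.
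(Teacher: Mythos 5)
Your proposal is correct and takes essentially the same approach as the paper: the paper's proof consists exactly of combining the two identities $\|\sin(\Theta(\ee,\tilde{\ee}))\|_F^2=\|P_\ee(I-P_{\tilde{\ee}})\|_F^2=\|(I-P_\ee)P_{\tilde{\ee}}\|_F^2$ with \cref{thm:daviskahan}, and your orthogonal decomposition $P_\ee-P_{\tilde{\ee}}=P_\ee(I-P_{\tilde{\ee}})-(I-P_\ee)P_{\tilde{\ee}}$ together with the Pythagorean identity is precisely the (standard) computation that the paper compresses into the phrase ``putting these two expressions together,'' producing the same factor $\sqrt{2}\cdot 2=2^{3/2}$. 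Your trace argument for the orthogonality of the two summands is valid, so the proof is complete.
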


\bibliographystyle{amsplain}
\def\cprime{$'$} \def\cprime{$'$} \def\cprime{$'$}
\providecommand{\bysame}{\leavevmode\hbox to3em{\hrulefill}\thinspace}
\providecommand{\MR}{\relax\ifhmode\unskip\space\fi MR }
\providecommand{\MRhref}[2]{%
  \href{http://www.ams.org/mathscinet-getitem?mr=#1}{#2}
}
\providecommand{\href}[2]{#2}

\end{document}